\documentclass[a4paper, intlimits, 10pt]{amsart}

\usepackage[T1]{fontenc}
\usepackage{float}

\usepackage{times}
\usepackage[english]{babel}
\usepackage[T1]{fontenc}
\usepackage{enumitem}
\usepackage{bm}
\usepackage{graphicx}

\usepackage{subcaption}
\usepackage{booktabs}
\usepackage{multirow}
\usepackage{tabularx}
\usepackage{color}
\usepackage{eurosym}
\usepackage{url}
\usepackage[numbers,sort&compress]{natbib}
\usepackage{arydshln}

\usepackage{comment}

\usepackage{amsthm, amssymb, amsfonts}

\usepackage{accents}

\newtheorem{theorem}{Theorem}[section]
\newtheorem{corollary}[theorem]{Corollary}

\newtheorem{lemma}[theorem]{Lemma}
\newtheorem{proposition}[theorem]{Proposition}
\newtheorem{definition}[theorem]{Definition}

\newtheorem{remark}[theorem]{Remark}

\numberwithin{equation}{section}

\makeatletter

\newcommand{\raisedrule}[3][0em]{\leavevmode\leaders\hbox{\rule[#1]{1pt}{#2}}\hfill\kern0pt \;\;\small\textsc{#3}\;\; \leavevmode\leaders\hbox{\rule[#1]{1pt}{#2}}\hfill\kern0pt}

\def\@pauthors{}
\newcounter{pauthorcnt}
\newcommand{\pauthor}[2][]{%
  \stepcounter{pauthorcnt}%
  \ifnum\value{pauthorcnt}=1
    \g@addto@macro\@pauthors{#2%
      \if\relax\detokenize{#1}\relax\else\unskip\textsuperscript{#1}\fi}%
  \else
    \g@addto@macro\@pauthors{,\ #2%
      \if\relax\detokenize{#1}\relax\else\unskip\textsuperscript{#1}\fi}%
  \fi
}

\def\@paffils{}
\newcounter{paffilcnt}
\renewcommand\thepaffilcnt{\alph{paffilcnt}}

\newcommand{\paffil}[1]{%
  \stepcounter{paffilcnt}%
  \edef\@tmpaffil{\noexpand\textsuperscript{\thepaffilcnt}\noexpand\;#1\noexpand\par\noexpand\vspace{2pt}}%
  \expandafter\g@addto@macro\expandafter\@paffils\expandafter{\@tmpaffil}%
}

\def\@pemails{}
\newcounter{pemailcnt}
\renewcommand\thepemailcnt{\arabic{pemailcnt}}

\newcommand{\pemail}[1]{%
  \stepcounter{pemailcnt}%
  \edef\@tmpemail{\noexpand\textsuperscript{\thepemailcnt}\noexpand\,\noexpand\mbox{\noexpand\ttfamily\noexpand\scriptsize #1}\noexpand\par}%
  \expandafter\g@addto@macro\expandafter\@pemails\expandafter{\@tmpemail}%
}

\def\@pabstract{}
\newcommand{\pabstract}[1]{\gdef\@pabstract{#1}}

\def\@pkeywords{}
\newcommand{\pkeywords}[1]{\gdef\@pkeywords{#1}}

\def\@ppaperinfo{}
\newcommand{\ppaperinfo}[1]{\gdef\@ppaperinfo{#1}}

\newcommand{\printtitlepage}{%
  \vspace{-1em}%
  \begin{center}{\large\@pauthors\par}\end{center}%
  \vspace{6pt}%
  \noindent
  \begin{tabular}[t]{@{}p{0.62\textwidth}@{\hspace{0.04\textwidth}}p{0.34\textwidth}@{}}
    \vspace{0pt}
    \raisedrule[0.2em]{0.1pt}{\footnotesize Abstract}
    \par\vspace{3pt}
    \footnotesize\@pabstract
    \par\vspace{6pt}
    \raisedrule[0.2em]{0.1pt}{\footnotesize Keywords}
    \par\vspace{3pt}
    \footnotesize\@pkeywords
  &
    \vspace{0pt}
    \raisedrule[0.2em]{0.1pt}{\footnotesize Authors Info}
    \par\vspace{3pt}
    \scriptsize\itshape\@paffils
    \scriptsize\@pemails
    \vspace{6pt}
    \raisedrule[0.2em]{0.1pt}{\footnotesize Paper Info}
    \par\vspace{3pt}
    \scriptsize\itshape\@ppaperinfo
  \end{tabular}
  \par\vspace{6pt}%
  \noindent\rule{\textwidth}{0.3pt}\par
  \vspace{10pt}%
}
\makeatother

\begin{document}

\pauthor[a,b,1]{Samuel Drapeau}
\pauthor[a,2]{Peng Luo}
\pauthor[a,3]{Xuan Tao}
\pauthor[b,4]{Tan Wang}

\paffil{School of Mathematical Sciences, Shanghai Jiao Tong University, Shanghai, China}
\paffil{Shanghai Advanced Institute of Finance, Shanghai Jiao Tong University, Shanghai, China}

\pemail{sdrapeau@saif.sjtu.edu.cn}
\pemail{peng.luo@sjtu.edu.cn}
\pemail{taoxuan@htsc.com.cn}
\pemail{tan.wang@saif.sjtu.edu.cn}

\ppaperinfo{\textnormal{Last updated:} \today \newline  \textnormal{MSC:} 91B50, 60H10, 91G20, 91B24.\newline \textnormal{JEL:} F31, G12, G15.}

\title{One Currency, Two Forward Prices: The Onshore-Offshore Renminbi Puzzle}
\author{}
\date{\today}
\maketitle

\pabstract{%
	Partially convertible economies face a market-design problem: trade integration, cross-border investment, and domestic balance-sheet exposure increase the demand for currency hedging before full financial integration is complete.
	China adopted a distinctive architecture for this problem by fostering a deliverable offshore Renminbi market (CNH) alongside the segmented onshore market (CNY), rather than relying only on non-deliverable forwards.
	This creates two venues for closely related claims on the same currency.
	Spot prices are tightly linked, yet CNY and CNH forwards display a persistent and economically large discrepancy.
	We study that discrepancy in a joint equilibrium model for spot and forward trading with transaction costs and segmented supply.
	In the benchmark case with common constant supply and deterministic costs, spot parity implies a forward differential with the wrong sign relative to the data.
	Random offshore stress, modeled as a jump in trading costs, overturns this benchmark while preserving tight spot parity.
	The model yields a semi-explicit representation in the CNY/CNH application and a calibration of the observed forward discrepancy in terms of the market-implied likelihood and severity of offshore liquidity stress.}

\pkeywords{On/Off-Shore Forward Pricing, Spot/Forward Equilibrium, Transaction Costs, FBSDE, Market Segmentation, Liquidity Stress}

\printtitlepage

\section{Introduction}
Developing economies that retain partial capital-account restrictions face a persistent market design problem.
Convertibility limits can reduce exposure to abrupt capital-flow reversals and give domestic financial markets time to deepen.
At the same time, trade integration, outward and inward investment, and larger cross-border balance sheets create a growing demand for currency hedging.
That demand comes not only from international investors, but also from domestic exporters, importers, and financial institutions whose cash flows are tied to foreign currency markets.
The issue is therefore not a binary choice between control and openness.
It is a sequencing problem: how to expand hedging access and market depth while maintaining a gradual integration of domestic and offshore financial conditions.

Offshore non-deliverable forward (NDF) markets have been one important response to this problem.
They allow exchange-rate risk to be traded without full access to the domestic currency market, and they often become the first liquid hedging venue for a partially convertible currency.
But NDFs are indirect claims: they settle in a reference currency rather than delivering the underlying domestic currency.
As they grow, they can also become central to price discovery, expectations, and offshore funding conditions.
Many economies have therefore used a range of approaches, from widening participation in NDF markets to expanding onshore hedging access and developing market-connect programs.
China's response to the same broad problem was distinctive.
Although a CNY NDF market existed, beginning in 2009 China promoted a deliverable offshore Renminbi market in Hong Kong (CNH) while the onshore Renminbi market (CNY) remained segmented.\footnote{Over the past decade and a half, China has implemented additional policies to liberalize RMB trade.
	Notable measures include enabling RMB trade settlement in over 25 countries via local banks, launching stock (2015) and bond (2017) interconnect programs for qualified financial institutions to access onshore markets, and expanding the use of swap lines with foreign central banks.
	These developments have contributed to a decline in the relative turnover of offshore RMB (CNH) compared to onshore RMB (CNY).}
At the same time, the traditional CNY non-deliverable forward market has gradually been replaced by the deliverable offshore CNH market.\footnote{According to the BIS Triennial Survey, the daily turnover of CNY NDFs decreased by 40\% from \$17.1 billion in 2013 to \$10.4 billion in 2016.
	During the same period, the daily turnover of CNH forwards doubled to \$16.4 billion, surpassing NDFs.}
This architecture is not simply an intermediate step between restrictions and liberalization.
It creates two deliverable venues for closely related claims on the same currency: an onshore market embedded in a more regulated financial system and an offshore market operating under more flexible but potentially more fragile liquidity conditions.
One offshore RMB is still one Renminbi, which gives spot prices a strong common anchor.
Yet the two venues differ in their hedging technology, funding conditions, and exposure to liquidity stress.
The CNY/CNH pair is therefore a natural laboratory for a broader question: how do spot and forward prices behave when the same terminal payoff is traded across segmented venues with different dynamic hedging costs?

Empirically, the spot relation is extremely tight.
Aside from occasional deviations during periods of financial turbulence,\footnote{For example, during and after the 2015 financial crisis in mainland China.} the spot price difference between the two markets rarely exceeds a few basis points, see Figure \ref{fig:spot_ratio}.
Over the period 2010--2021, the log ratio of CNY/CNH spot prices averaged $-0.01\%$, with interquartile ranges of $-0.1\%$ and $0.1\%$.
If spot parity were the dominant force at all horizons, forward prices should also be close across markets.
Yet the forward market shows a markedly different picture.
Forward price deviations are systematically negative and economically large across maturities.
More precisely,
\begin{equation*}
	r^{Y/H}_T := \frac{1}{T} \ln\left(\frac{F^Y_T}{F^H_T}\right) \approx -4.4\%
\end{equation*}
consistently since the establishment of the CNH market and across all maturities $T = 1M, 2M, 3M, 6M, 1Y$, see Figure \ref{fig:nforward_box}, and Table \ref{table:forward_ratio}.

\begin{figure}[H]
	\centering
	\includegraphics[width=\textwidth]{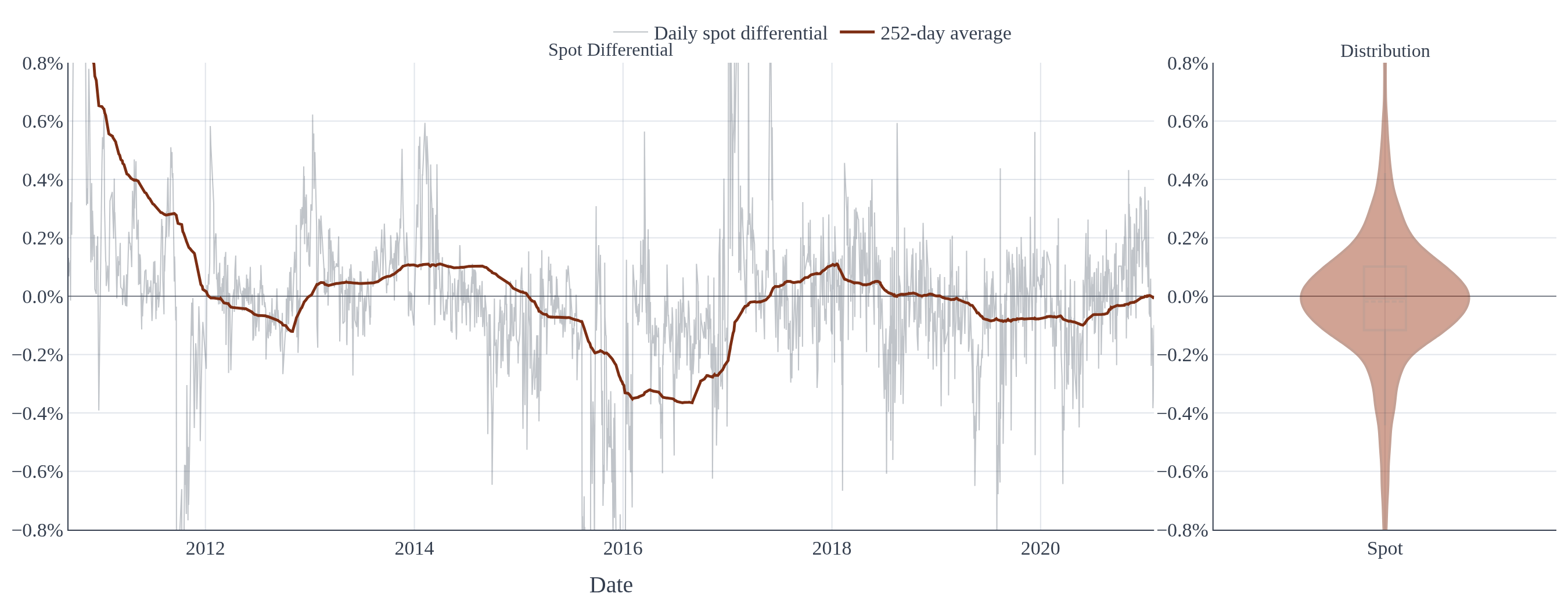}
	\caption{Log ratio of CNY/CNH spot rates in $\%$.}
	\label{fig:spot_ratio}
\end{figure}

\begin{figure}[h!]
	\centering
	\includegraphics[width=\textwidth]{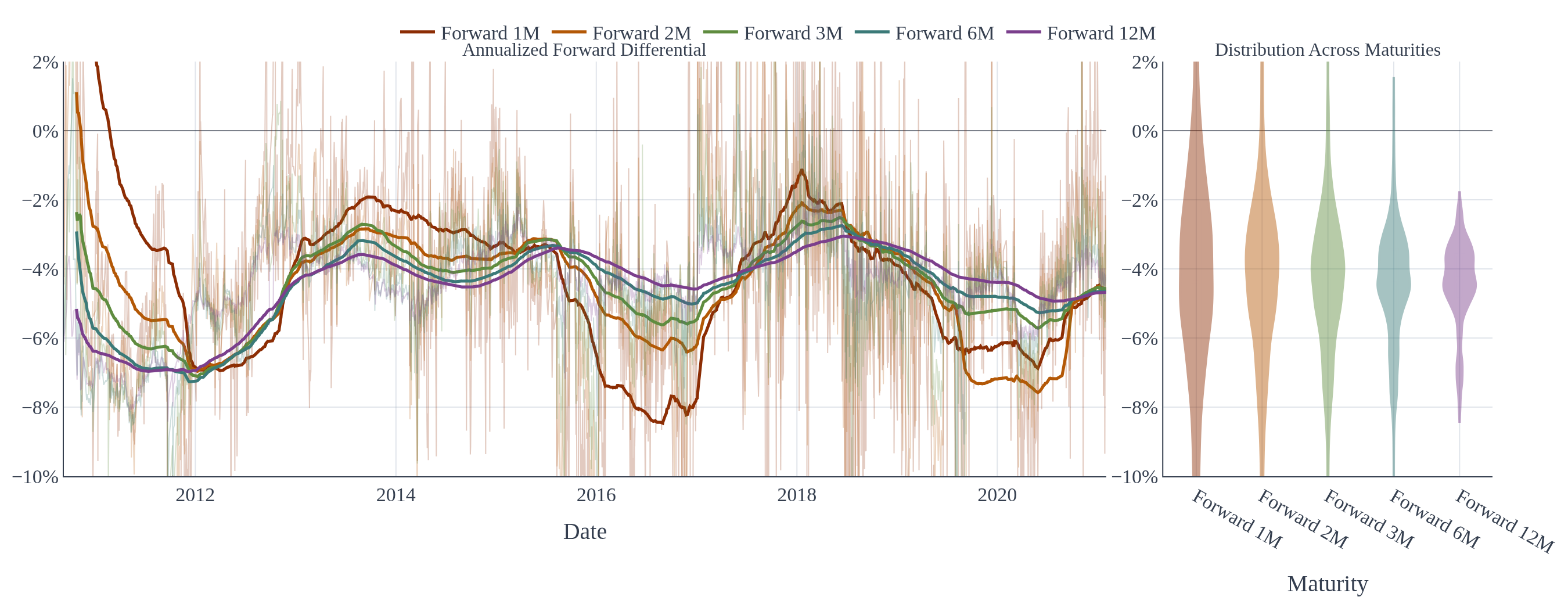}
	\caption{Box plot of the annualized forward differential $r^{Y/H}_T = \frac{1}{T}\ln(F^Y_T/F^H_T)$ across maturities.}
	\label{fig:nforward_box}
\end{figure}

\begin{table}[H]
	\begin{center}
		\begin{tabular}{@{}lcrrrrr@{}}
			\toprule
			           &  & $r^{Y/H}_1$ & $r^{Y/H}_2$ & $r^{Y/H}_3$ & $r^{Y/H}_6$ & $r^{Y/H}_{12}$ \\
			\midrule
			mean       &  & -4.47\%     & -4.58\%     & -4.41\%     & -4.44\%     & -4.39\%        \\
			std        &  & 4.98\%      & 3.36\%      & 2.07\%      & 1.54\%      & 1.17\%         \\
			$q_{25\%}$ &  & -6.36\%     & -5.71\%     & -5.50\%     & -5.20\%     & -4.86\%        \\
			Median     &  & -4.29\%     & -4.18\%     & -4.22\%     & -4.32\%     & -4.26\%        \\
			$q_{75\%}$ &  & -2.27\%     & -2.87\%     & -3.14\%     & -3.43\%     & -3.59\%        \\
			\bottomrule
		\end{tabular}
		\caption{Summary statistics about the yield $r^{Y/H}_T = \frac{1}{T}\ln(F^{Y}_T/F^{H}_T)$ from 2010 to 2021 for maturities $1$, $2$, $3$, $6$ and $12$ months.
		}
		\label{table:forward_ratio}
	\end{center}
\end{table}

The empirical object is therefore a persistent and economically large difference between onshore and offshore forward prices, despite nearly identical spot prices across the two venues.

A natural first benchmark is the standard no-arbitrage relation between spot prices, forward prices, and funding rates.
In stylized classical foreign exchange theory, this relation gives
\begin{equation*}
	F^Y_T = S^Y e^{(r^Y - r)T} \quad \text{and} \quad F^H_T = S^H e^{(r^H - r)T}
\end{equation*}
where $r^Y$, $r^H$, and $r$ are the funding rates onshore, offshore, and foreign (USA), respectively.
Since $S^Y \approx S^H$, we would have
\begin{equation*}
	r^{Y/H}_T = r^Y - r^H
\end{equation*}
At the macro level, however, the relevant onshore-offshore funding differential is much smaller on average than the forward wedge and does not display the same persistent pattern.
What it does exhibit are rare but extraordinary spikes, especially around the 2016 CNH stress episode, when offshore liquidity conditions tightened abruptly.
This suggests that the forward wedge is not explained by a stable carry differential alone.
It reflects a state-contingent hedging environment: offshore trading may be cheaper in normal times but much more expensive in stress states.

This points to a different mechanism.
Because the two venues ultimately trade the same currency, spot parity can remain tightly anchored by convertibility channels, institutional linkages, and arbitrage.
Forward pricing is different: dealers who absorb forward demand must hedge dynamically in the spot market, so the relevant marginal object is the expected future cost of hedging rather than the current spot differential alone.
If the offshore market is cheap in normal times but exposed to rare and severe liquidity squeezes, those state-contingent hedging costs are capitalized into the offshore forward price even when the spot differential remains negligible.
The spikes observed in offshore funding markets, together with contemporaneous increases in margin requirements, point directly to this type of random cost shock.
Under this interpretation, tight spot parity and a persistently more expensive offshore forward are not contradictory: they indicate that state-contingent offshore frictions are priced mainly in forwards rather than in spot.

\begin{figure}[h!]
	\centering
	\includegraphics[width=\textwidth]{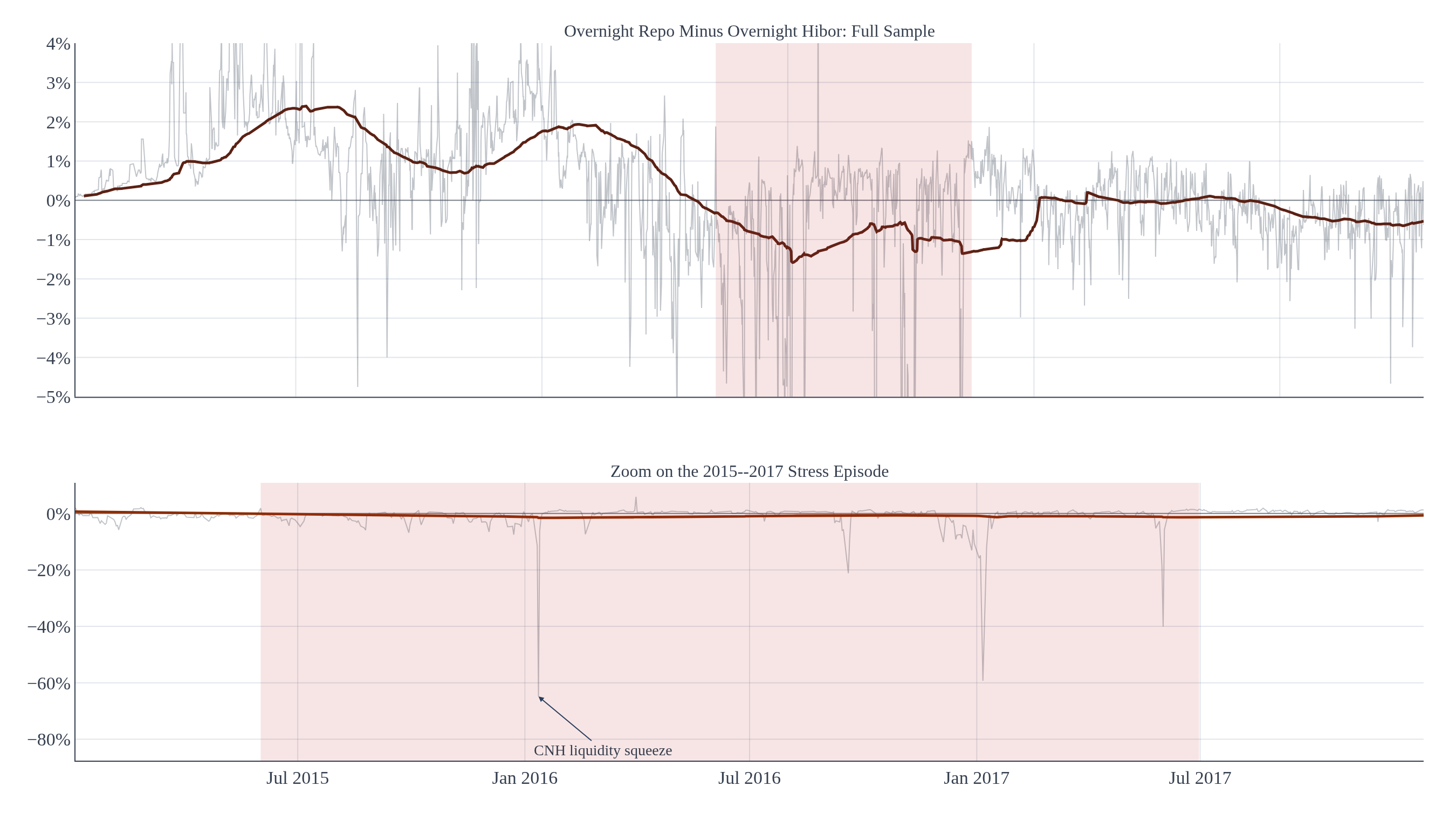}
	\caption{Onshore overnight repo minus offshore overnight Hibor in $\%$. The top panel focuses on the regular range, while the bottom panel shows the full stress episode. The differential is moderate on average but exhibits rare and pronounced spikes, especially around the 2016 CNH liquidity squeeze.}
	\label{fig:hibor_diff}
\end{figure}

This is the mechanism we formalize in the model.
The model allows both supply conditions and trading costs to differ across venues, but in the CNY/CNH application the central force is the possibility of abrupt offshore cost deterioration.
Onshore trading is more regulated and typically more expensive at the margin, while offshore trading is more flexible and cheaper in normal times but can become suddenly much more expensive when liquidity conditions tighten.
The January 2016 CNH squeeze is the natural example: the overnight funding rate jumped from roughly $3\%$ to above $70\%$ within days, together with a sharp increase in margin requirements.

Our contribution is to study this mechanism in a joint equilibrium model for spot and forwards.
The spot market is dynamic, the forward market is static, and the two are linked because forward dealers hedge through future spot trading.
This mixed static-dynamic structure separates current spot linkage from future hedging costs.
The deterministic benchmark gives a sharp empirical test: under common constant supply and spot parity, it implies $F^Y-F^H\approx m(c^Y-c^H)$, so with the normal ordering $c^Y>c^H$ it predicts $F^Y>F^H$, the opposite of the data.
Random offshore stress is therefore not an auxiliary feature but the mechanism that reverses the sign while preserving tight spot parity.
In the jump specification, the ordering is $\underline{c}^H<c^Y<\overline{c}^H$: offshore hedging is cheaper in normal times and more expensive after a stress event.
The observed forward wedge can then be read as a market-implied assessment of abrupt offshore stress: prices internalize both the perceived probability of a future liquidity squeeze and the expected severity of the associated hedging cost increase.

\subsection{Literature Review}

The paper is related to three strands of literature.

The first strand concerns the institutional development of offshore RMB markets and the empirical relation between CNY and CNH.
The emergence of the offshore RMB market and its interaction with the broader Asian offshore-currency architecture are documented, for instance, by \citet{maziad2012rmb}, \citet{craig2013development}, \citet{imf2020}, and \citet{cheung2021evolution}.
The shift from non-deliverable to deliverable offshore RMB instruments is discussed by \citet{mccauley2016ndf} and \citet{ehlers2016changing}, while \citet{packer2019renminbi} document the later tilt of renminbi turnover back toward onshore trading.
Empirical work on CNY/CNH itself has mostly focused on price discovery, volatility spillovers, and liquidity or policy determinants of the onshore-offshore gap; see, among others, \citet{shen2014}, \citet{wu2013dynamic}, \citet{cheung2014offshore}, \citet{Leung2014interactions}, \citet{yan2015}, \citet{shu2015one}, and \citet{funke2015assessing}.
That literature is informative about co-movements, market segmentation, and RMB internationalization, but it does not provide an equilibrium explanation for a persistent deliverable forward wedge under near-spot parity.

The second strand studies exchange rates and asset prices under segmentation, liquidity frictions, or capital-account distortions.
In particular, \citet{gabaix2015international} emphasize the role of financial intermediaries and international liquidity in exchange-rate determination.
The paper is also connected to the literature on covered interest parity failures and cross-currency bases, where balance-sheet costs and hedging imbalances generate persistent deviations from frictionless no-arbitrage benchmarks; see \citet{du2018deviations} and \citet{sushko2016failure}.
For the RMB specifically, \citet{jermann2022two} and \citet{mccauley2018recent} study the exchange-rate-policy regime and its relation to broader currency co-movements.
Our setting is narrower and more microstructural: the two venues trade the same RMB payoff, the spot side is tightly linked across venues, and the object of interest is the forward wedge generated by heterogeneous hedging conditions rather than the level of the exchange rate itself.

The third strand is the equilibrium literature with transaction costs and trading frictions.
Dynamic asset-pricing effects of trading costs appear already in \citet{vayanos1999equilibrium}, \citet{lo2004asset}, and \citet{buss2019dynamic}.
More recent equilibrium work with transaction costs includes \citet{bouchard2018equilibrium}, \citet{herdegen2021equilibrium}, \citet{weston2018existence}, \citet{escauriaza2022radner}, and \citet{gonon2021asset}.
Related optimal trading papers such as \citet{garleanu2016dynamic} and \citet{almgren2016option} study super-linear trading frictions in dynamic hedging problems, but with prices taken as exogenous.
Relative to these literatures, the present paper combines endogenous spot and forward pricing, random trading frictions, and a jump specification motivated by offshore RMB stress episodes.
This is the feature that allows the model to address both the wrong-sign deterministic benchmark and the calibration of market-implied stress beliefs.

\subsection{Economic Mechanism}
Section 3 specializes the theory to the CNY/CNH case.
At the economic level, the mechanism is simple.
We consider two trading venues, onshore $Y$ and offshore $H$, for the same RMB payoff $\mathfrak{G}$ delivered at time $T$.
On each venue, the spot price is determined dynamically while the forward price is determined in static equilibrium.
The two markets differ only through local trading frictions and local balance-sheet capacity.

More precisely, on each venue $i\in\{Y,H\}$, the spot price $S^i$ of the common payoff satisfies
\begin{equation*}
	S^i(t) = S^i(0) + \int_0^t \mu^i(s)\,ds + \int_0^t \sigma^i(s)\,dW(s),
	\qquad S^i(T)=\mathfrak{G},
\end{equation*}
while the forward price $F^i$ is the static equilibrium price at which dealer supply meets investor demand.
The link between the two comes from hedging: dealers absorb forward demand and then replicate that exposure by trading dynamically in spot.
Trading $q$ units on the spot market entails a quadratic execution cost $\frac{c^i(t)}{2}q^2$, so the forward price depends on the entire future path of spot hedging costs, not only on current funding conditions.
At the same time, the local availability of the asset is described by a supply process
\begin{equation*}
	M^i(t) = M^i(0) + \int_0^t m^i(s)\,ds,
\end{equation*}
which determines how much inventory the arbitrage sector can absorb on each venue.
For CNY/CNH, the key asymmetry is in trading costs.
Onshore costs are structurally higher and more stable because of capital-account restrictions, whereas offshore costs are lower in normal times but can jump sharply when liquidity conditions tighten.
The important point is that these changes need not build up smoothly over calendar time: they can be triggered by a liquidity event that arrives unexpectedly, and from that random date onward hedging costs can change dramatically.
This is exactly the type of specification studied in the application section: if $\tau$ denotes a random stress time, then the offshore cost process can be modeled by
\begin{equation*}
	c^H(t)=
	\begin{cases}
		\underline{c}^H, & t<\tau,     \\
		\overline{c}^H,  & t\geq \tau,
	\end{cases}
\end{equation*}
with $\underline{c}^H \ll c^Y < \overline{c}^H$. The general model also allows time-varying supply, but in the baseline calibration we keep the supply rate constant and common across venues and let the random shift operate through costs, because the data point most directly to sudden changes in funding and margin conditions.

Three groups of agents interact on each venue.
Investors demand forward contracts for hedging purposes, with demand decreasing in the forward price.
Dealers absorb that forward demand and hedge it by trading in the spot market.
Their spot inventory evolves according to their trading rate, and they face a trade-off between expected trading gains, execution costs, inventory risk, and a terminal mismatch penalty if their final inventory does not match their forward commitment.
Finally, arbitrageurs trade in the spot market and absorb part of the available supply; in equilibrium, they expand their position until marginal trading gains equal marginal trading costs.

This leads to a mixed static-dynamic equilibrium.
For a given forward supply, spot-market clearing determines the spot dynamics and the dealer inventory path.
Conversely, the dealer's willingness to supply forwards depends on the dynamic spot hedging problem they face, because a forward position is valuable only through the future sequence of spot trades required to hedge it.
Equating dealer supply with investor demand then pins down the forward price.
Hence the forward is not priced in a separate static block: it is linked to the entire dynamic spot equilibrium through expected future inventory and trading costs.
This feedback from future hedging conditions to current forward supply is the key economic channel of the paper.

In the CNY/CNH application, the mechanism gives a clear interpretation of the data.
Spot parity can remain tight because both venues ultimately trade the same currency and are linked by policy and arbitrage.
At the same time, forward prices may diverge because the expected cost of hedging forward positions differs across venues.
In particular, if the offshore market is cheap in normal times but exposed to rare and severe liquidity squeezes, those state-contingent costs are capitalized into the offshore forward price even when the spot differential remains small.
The role of randomness is substantive rather than cosmetic. In the constant-cost benchmark, once the spot equality constraint $S^Y(0)=S^H(0)$ is imposed, the model implies
\begin{equation*}
	\Delta F = m(c^Y-c^H),
\end{equation*}
With common constant supply $m$ on both sides and the empirically relevant ordering $c^Y > c^H$, this gives the wrong sign for the forward wedge: it predicts that the onshore forward should be more expensive than the offshore one. The main empirical message of the application is that random offshore stress reverses this deterministic prediction while preserving tight spot parity. This is also why we do not rely here on cross-market supply differences to generate the wedge. In the jump specification, the offshore forward price already reflects the possibility of a future stress episode, so current forward prices incorporate expected future hedging conditions rather than only current trading conditions.
The numerical exercise then gives a simple interpretation of the wedge. Holding fixed the common supply rate and the normal-times offshore cost, each target value of $F^Y-F^H$ can be mapped, under spot parity, into a market-implied pair $(\lambda,\overline c^H)$ describing the probability of an offshore stress episode and the size of the offshore cost increase. The forward wedge therefore tells us which combinations of stress probability and stress severity are consistent with the observed data.

\subsection{Outline}

The paper is organized as follows.
Section 2 presents the general spot-forward equilibrium model, states the existence results, and isolates the no-risk-aversion representation used in the application.
Section 3 applies this representation to CNY/CNH: it formulates the spot-parity restriction, shows the deterministic sign failure, introduces random offshore stress, and interprets the forward wedge through an inverse calibration of stress likelihood and severity.
Section 4 contains the proofs of the equilibrium results, the derivations behind the CNY/CNH formulas stated in Section 3, and the technical estimates for the small-risk-aversion argument.

\section{Spot and Forward Market Equilibrium with Random Transaction Costs and Supply}

We now introduce the general equilibrium model underlying the economic mechanism described in the introduction.
Throughout, we work on a filtered probability space $(\Omega, \mathcal{F}, \mathbb{F}=(\mathcal{F}_t)_{0\leq t\leq T}, \mathbb{P} )$ generated by a Brownian motion $W$ and an independent compensated Poisson random measure $\tilde{N}$ defined on $\Omega \times [0,T] \times \mathbb{R}$ with compensator $\nu(t,dx)dt$.\footnote{Intuitively, but not necessarily, the Brownian motion drives idiosyncratic risk, while the Poisson process represents exogenous sudden changes in market constraints.}
For notational convenience, we denote by $X = (W, \tilde{N})$ the resulting martingale.
We use the following spaces:
\begin{itemize}[fullwidth]
	\item $\mathbb{S}^\infty$ and $\mathbb{S}^p$ denote the set of c\`adl\`ag processes $\mu$ such that $\sup_{t\leq T}|\mu(t)|$ is either bounded or $p$-integrable, respectively.
	\item $\mathbb{H}$ denote those processes $\sigma=(\sigma_1, \sigma_2)$ for which $\int \sigma dX$ is well defined.\footnote{That is $\sigma_1$ is a progressively measurable stochastic process, while $\sigma_2$ is such that $\int_{\mathbb{R}}\sigma_2 \nu(dx)$ is progressively measurable.}
	\item $\mathbb{H}^p$ and $\mathbb{H}_{BMO}$ are those integrands $\sigma$ in $\mathbb{H}$ such that $\int \sigma dX$ is in $\mathbb{S}^p$ or is $BMO$, respectively.
\end{itemize}
For $\sigma$ in $\mathbb{H}^2$, we define $\sigma^2 := \sigma_1^2 + \int_{\mathbb{R}}\sigma_2^2 d\nu$, since this is the quadratic variation density of the process $\int \sigma dX$.

We consider a fixed bounded measurable contingent claim $\mathfrak{G}$.
This contingent claim is dynamically traded on the spot market at price $S=(S(t))$ and statically on the forward market at price $F$.
In equilibrium, the spot price follows a Bachelier dynamic
\begin{equation*}
	S(t) = S(0) + \int_0^t \mu ds + \int_0^t \sigma dX, \quad \text{with } S(T) = \mathfrak{G}
\end{equation*}
where $\mu$ is in $\mathbb{S}^\infty$ and $\sigma$ is in $\mathbb{H}_{BMO}$.
Market constraints enter through supply and trading costs. The supply of the underlying asset is denoted by $M(t) = M(0) + \int_0^t m\,ds$, where $m$ is the supply rate, and spot trading costs are quadratic in quantity: trading an amount $q$ at time $t$ costs $c(t)\frac{q^2}{2}$.

There are three types of market participants:

\begin{itemize}[fullwidth]
	\item \textbf{Investors}, with a forward demand function $F \mapsto \mathfrak{d}(F)$ that is strictly decreasing and continuous in the forward price $F$.
	\item \textbf{Retailers} who, after selling an amount $\mathfrak{s}$ of forward contracts at price $F$, hedge in the spot market with a trading rate $q$ in $\mathbb{H}^2$ and corresponding inventory $Q = \int q\,ds$.
	      Their objective function is
	      \begin{align*}
		      J(q, \mathfrak{s}) = & E\Bigg[ \int_{0}^{T}Q \mu dt + \mathfrak{s}(F-S(T)) - \frac{\rho}{2}\left(Q(T)-\mathfrak{s}\right)^2\Bigg] \\
		                           & \quad - \frac{1}{2}E\Bigg[\int_{0}^{T} c q^2dt + \phi \int_{0}^{T} Q^2 \sigma^2  dt  \Bigg]
	      \end{align*}
	      where $\phi$ is the risk-aversion parameter and $\rho(Q(T)-\mathfrak{s})^2$ penalizes hedge mismatch at maturity. Throughout the paper, we will assume $\rho > 0$ is a fixed constant.
	\item \textbf{Arbitrageurs}, with inventory $\tilde{Q} = \tilde{Q}(0) + \int \tilde{q}\,ds$ and trading rate $\tilde{q}$ in $\mathbb{H}^2$, who equate instantaneous gains and marginal costs, that is, $c(t)\tilde{q}(t) = \mu(t)$.
\end{itemize}

The equilibrium we consider is of Arrow--Debreu type. Since investor optimality is already encoded by the demand function $F \mapsto \mathfrak{d}(F)$, the equilibrium conditions only need to specify retailer optimality, arbitrageur optimality, and clearing in the forward and spot markets.
\begin{definition}
	An equilibrium is a tuple $(F^\ast,S^\ast,q^\ast,\tilde q^\ast,\mathfrak{s}^\ast)$ such that:
	\begin{enumerate}[label=\textbf{(\roman*)}]
		\item $S^\ast$ has the form
		      \begin{equation*}
			      S^\ast(t)=S^\ast(0)+\int_0^t \mu^\ast(s)\,ds+\int_0^t \sigma^\ast\,dX,
			      \qquad S^\ast(T)=\mathfrak{G},
		      \end{equation*}
		      for some $\mu^\ast \in \mathbb{S}^\infty$ and $\sigma^\ast \in \mathbb{H}_{BMO}$;
		\item retailer optimality holds:
		      \begin{equation*}
			      J(q^\ast,\mathfrak{s}^\ast) \ge J(q,\mathfrak{s}^\ast)
		      \end{equation*}
		      for every admissible spot strategy $q$;
		\item arbitrageur optimality holds:
		      \begin{equation*}
			      c\tilde q^\ast=\mu^\ast;
		      \end{equation*}
		\item the forward and spot markets clear:
		      \begin{equation*}
			      \mathfrak{s}^\ast=\mathfrak{d}(F^\ast),
			      \qquad
			      Q^\ast(t)+\tilde Q^\ast(t)=M(0)+\int_0^t m(s)\,ds,
		      \end{equation*}
		      where $Q^\ast=\int q^\ast\,ds$ and $\tilde Q^\ast=\tilde Q^\ast(0)+\int \tilde q^\ast\,ds$.
	\end{enumerate}
\end{definition}

We can now state the two main equilibrium results of the paper.
We start with the general small-risk-aversion result, which gives the local equilibrium structure around the benchmark case.
We then turn to the no-risk-aversion case $\phi=0$, which is the most explicit and economically transparent benchmark.

\begin{theorem}[Small-Risk-Aversion Equilibrium]\label{thm:small_risk_aversion}
	Assume that $\rho^2 T^2\|1/c\|_\infty^2 < 1$.
	Then there exists $\phi_0 > 0$ such that, for every $\phi \in [0,\phi_0]$, there exists a unique equilibrium.
	It is characterized by the coupled FBSDE system
	\begin{equation}\label{eq:equilibrium_system}
		\begin{cases}
			S(t)         & = \displaystyle \mathfrak{G} - \int_t^T \mu\,ds - \int_t^T \sigma\,dX                                            \\
			Q(t)         & = \displaystyle \int_0^t q\,ds                                                                                   \\
			c(t)q(t)     & = \displaystyle -\rho\left(Q(T)- \mathfrak{s}\right) + \int_t^T \left(\mu-\phi\sigma^2Q\right)ds -\int_t^T Z\,dX \\
			\mu(t)       & = \displaystyle c(t)(m(t) - q(t))                                                                                \\
			E[Q(T)]      & = \displaystyle \mathfrak{s} - \frac{1}{\rho}\left(F-E[\mathfrak{G}]\right)                                      \\
			\mathfrak{s} & = \displaystyle \mathfrak{d}(F).
		\end{cases}
	\end{equation}
	Moreover, with $cq = \rho(\Lambda-PQ)$, the equilibrium is equivalently characterized by
	\begin{equation*}
		\begin{cases}
			S(t)         & = \displaystyle \mathfrak{G} -\int_t^T \left(cm - \rho\left(\Lambda - PQ\right)\right)dt - \int_t^T \sigma dX \\
			E[Q(T)]      & = \displaystyle \mathfrak{s} - \frac{1}{\rho}\left(F - E[\mathfrak{G}]\right)                                 \\
			\mathfrak{s} & = \displaystyle \mathfrak{d}(F)
		\end{cases}
	\end{equation*}
	where $P$, $\Lambda$, and $Q$ satisfy the Riccati system
	\begin{equation*}
		\begin{cases}
			P(t)       & = \displaystyle 1 -\int_{t}^{T} \left(P\left(\frac{\rho P}{c} +1  \right) - \frac{\phi}{\rho}\sigma^2\right) ds - \int_t^T Z^P dX          \\
			\Lambda(t) & = \displaystyle \mathfrak{s} - \int_t^T \left( \Lambda \left(\frac{\rho P}{c} + 1\right) -\frac{cm}{\rho}\right)dt - \int_t^T Z^\Lambda dX \\
			Q(t)       & = \displaystyle \rho \int_0^t \left(\frac{\Lambda}{c} - \frac{P}{c} Q\right) ds.
		\end{cases}
	\end{equation*}
\end{theorem}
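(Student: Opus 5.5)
The proof has three stages: derive the first-order conditions that give system \eqref{eq:equilibrium_system}, decouple it through an affine ansatz, and close the remaining fixed point on $(\sigma,\mathfrak s)$ by a contraction for small $\phi$.

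\textbf{Step 1: first-order conditions.} For a given price process $S$ with $(\mu,\sigma)\in\mathbb{S}^\infty\times\mathbb{H}_{BMO}$ and a given $\mathfrak{s}$, the map $q\mapsto J(q,\mathfrak s)$ is strictly concave on $\mathbb{H}^2$, since $c>0$, $\rho>0$ and $\phi\ge 0$.

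\begin{itemize}
\item Taking the Gateaux derivative in a direction $\delta q$ with $\delta Q=\int\delta q$ and using Fubini,
\[
E\Big[\int_0^T \delta q(t)\Big(\int_t^T(\mu-\phi\sigma^2 Q)ds-\rho(Q(T)-\mathfrak s)-c q(t)\Big)dt\Big].
\]
Optimality is therefore equivalent to $c(t)q(t)=E_t[-\rho(Q(T)-\mathfrak s)+\int_t^T(\mu-\phi\sigma^2Q)ds]$.
\item Martingale representation with respect to $X=(W,\tilde N)$ turns this into the BSDE line of \eqref{eq:equilibrium_system}.
\item Spot clearing gives $\tilde q=m-q$, and arbitrageur optimality then gives $\mu=c(m-q)$.
\item The forward price is the retailer's marginal indifference price in $\mathfrak s$: differentiating $J$ in $\mathfrak s$ gives $F-E[\mathfrak G]+\rho(E[Q(T)]-\mathfrak s)=0$, which is the line $E[Q(T)]=\mathfrak s-\frac1\rho(F-E[\mathfrak G])$.
\item Together with $\mathfrak s=\mathfrak d(F)$, this shows that equilibria are exactly the solutions of \eqref{eq:equilibrium_system}.
\end{itemize}

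\textbf{Step 2: decoupling.} Substituting $\mu=c(m-q)$ makes the $q$-equation a linear FBSDE in $(Q,cq)$ with random coefficients. I use the affine ansatz $cq=\rho(\Lambda-PQ)$ and apply Itô's formula to $\rho(\Lambda-PQ)$. Matching the $dt$ and $dX$ terms with the BSDE for $cq$ yields the Riccati BSDE for $P$ and the linear BSDE for $\Lambda$ in the statement, with terminal values $P(T)=1$ and $\Lambda(T)=\mathfrak s$.

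\begin{itemize}
\item \emph{Solvability of the $P$-equation.} For $\phi=0$ the driver $P(\rho P/c+1)$ is locally Lipschitz. Comparison and a monotone (or truncation) argument give a solution $P$ with $0<P\le 1$, bounded, with $Z^P\in\mathbb{H}_{BMO}$.
\item \emph{Solvability of the $\Lambda$-equation.} Given $P$, the equation is linear with bounded coefficients, so it has a unique bounded solution.
\item \emph{Forward equation.} $Q$ solves a linear ODE pathwise.
\item \emph{Verification.} Conversely, $(P,\Lambda,Q)$ reproduces a solution of the FBSDE. Uniqueness of the linear FBSDE follows from the monotonicity given by concavity, or equivalently from strict concavity of $J$.
\end{itemize}

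\textbf{Step 3: fixed point in $(\sigma,\mathfrak s)$.} The drift $\mu=cm-\rho(\Lambda-PQ)$ is now determined. I recover $(S,\sigma)$ from the BSDE $S(t)=\mathfrak G-\int_t^T\mu\,ds-\int_t^T\sigma\,dX$. Since $\mathfrak G$ and $\mu$ are bounded, $\sigma\in\mathbb{H}_{BMO}$. The forward condition reduces to a scalar equation in $F$:
\[
E[Q(T)]\big(\mathfrak d(F)\big)=\mathfrak d(F)-\tfrac1\rho\big(F-E[\mathfrak G]\big).
\]
$Q(T)$ depends on $\mathfrak s$ affinely through $\Lambda$. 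The condition $\rho^2T^2\|1/c\|_\infty^2<1$ is meant to make the coefficient of $\mathfrak s$ in $E[Q(T)]$ strictly less than one, through the bound $|\partial_{\mathfrak s}Q(T)|\le \rho T\|1/c\|_\infty$. With that, the right-hand side minus the left-hand side is strictly decreasing and continuous in $F$, because $\mathfrak d$ is strictly decreasing and continuous. A unique root then follows by the intermediate value theorem or by contraction.

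For $\phi>0$ the map $\sigma\mapsto$ (solve Riccati with $\phi\sigma^2$) $\mapsto$ (new $\mu$) $\mapsto$ (new $\sigma$) is a genuine fixed point. I set it on a BMO ball and show it is a contraction for $\phi\le\phi_0$, using BMO energy inequalities and the stability of the Riccati BSDE in its driver.

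\textbf{Main obstacle.} The hard step is the contraction estimate for $\phi>0$. The term $\phi\sigma^2/\rho$ enters the Riccati driver quadratically in a BMO quantity, so the a priori bound $0<P\le1$ has to be maintained while $\sigma$ varies. I would first establish uniform bounds on $P$, $\Lambda$ and $\mu$ for $\|\sigma\|_{BMO}\le R$ by comparison. I would then estimate differences through linearized BSDEs, using the John–Nirenberg inequality and the fact that the $BMO$ norm of $\sigma$ is controlled by $\|\mathfrak G\|_\infty+T\|\mu\|_\infty$. Choosing $R$ first and then $\phi_0$ small gives self-mapping and contraction simultaneously. The hypothesis on $\rho T\|1/c\|_\infty$ keeps the $\mathfrak s$-coupling contractive uniformly in this ball.
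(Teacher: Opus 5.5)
\textbf{Verdict.} Your outline is correct in substance, but it closes the fixed point by a different mechanism than the paper.

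\textbf{How the paper argues.} The paper freezes $\mathfrak s$ and perturbs around the explicit $\phi=0$ equilibrium. It estimates the coupled perturbation FBSDE for $(\hat Q,\hat\mu)$ by an It\^o energy argument. The terminal condition $\hat\mu(T)=\rho\hat Q(T)$ produces the term $\rho^2\|\hat Q\|_{\mathbb S^\infty}^2\le\rho^2T^2\|1/c\|_\infty^2\|\hat\mu\|_{\mathbb S^\infty}^2$, and this is where $\rho^2T^2\|1/c\|_\infty^2<1$ is used: in the self-mapping bound, and hence in the $O(\phi)$ bound on $\hat Q^*$. The forward price is then obtained separately, by showing that $\mathfrak s\mapsto E[\hat Q^*(\mathfrak s)(T)]$ is an $O(\phi)$-Lipschitz perturbation of the affine $\phi=0$ supply curve.

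\textbf{How your route differs.} You solve the forward--backward coupling exactly through the Riccati ansatz for each frozen $\sigma$. You then get the contraction from the uniform bounds and the $O(\phi)$ stability of $(P,\Lambda,Q)$ in $\sigma$ (Lemmas \ref{lemma:riccati_solution} and \ref{lemma:riccati_stability}), together with $\|\sigma\|_{\mathbb H_{BMO}}\lesssim\|\mathfrak G\|_\infty+T\|\mu\|_{\mathbb S^\infty}$. None of these ingredients needs $\rho T$ to be small, so on your route the hypothesis is essentially idle. The paper's route, in exchange, yields an explicit $O(\phi)$ distance to the benchmark equilibrium.

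\textbf{Role of the hypothesis.} The role you assign to the hypothesis is redundant, because the $\mathfrak s$-coefficient is automatically below one. For frozen $\sigma$, $\Pi:=\partial_{\mathfrak s}\Lambda$ solves $\Pi(t)=1-\int_t^T\Pi(\rho P/c+1)\,ds-\int_t^TZ^\Pi\,dX$. By comparison, $0<\Pi\le P$, and therefore $\partial_{\mathfrak s}Q(T)\le\rho\int_0^Te^{-\rho\int_s^TP/c\,du}\tfrac{P(s)}{c(s)}\,ds=1-e^{-\rho\int_0^TP/c\,du}<1$. Equality holds for $\phi=0$, which is exactly the paper's identity $1-\rho E[\alpha(T)]=e^TP(0)$. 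Note also that $P\le 1$ is not preserved for $\phi>0$; only $\bar P\le P\le 1+\phi R^2/\rho$ holds, which is enough.

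\textbf{What you should make explicit: the $\mathfrak s$-coupling for $\phi>0$.}
\begin{enumerate}
\item $Q(T)$ is affine in $\mathfrak s$ only for frozen $\sigma$. Since $\mathfrak d$ is merely continuous and strictly decreasing, no contraction in $\mathfrak s$ itself is available.
\item What works is the following. For each $\sigma$ in the ball, let $a(\sigma),b(\sigma)$ be the slope and intercept of $\mathfrak s\mapsto E[Q(T)]$. Define $\mathfrak s(\sigma)$ as the unique root of $\chi_\sigma(\mathfrak s):=(1-a(\sigma))\mathfrak s-b(\sigma)-(\mathfrak d^{-1}(\mathfrak s)-E[\mathfrak G])/\rho$.
\item The increments of $\chi_\sigma$ have slope at least $1-a(\sigma)\ge e^{-\rho T\|1/c\|_\infty\|P\|_{\mathbb S^\infty}}$. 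Hence $|\mathfrak s(\sigma_1)-\mathfrak s(\sigma_2)|\le(1-a(\sigma_1))^{-1}\bigl(|a(\sigma_1)-a(\sigma_2)|\,|\mathfrak s(\sigma_2)|+|b(\sigma_1)-b(\sigma_2)|\bigr)=O(\phi)\|\sigma_1-\sigma_2\|_{\mathbb H_{BMO}}$.
\item In particular $\mathfrak s(\sigma)$ stays bounded on the ball. The $|\mathfrak s|$-dependent constants $C_\Lambda$ and $h_Q$ therefore remain uniform.
\item The composite map $\sigma\mapsto\mathfrak s(\sigma)\mapsto(P,\Lambda,Q)\mapsto\mu\mapsto\sigma'$ is then a contraction for small $\phi$.
\end{enumerate}
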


\begin{theorem}[No-Risk-Aversion Equilibrium]\label{thm:existence_equilibrium_no_risk_aversion}
	If $\phi = 0$, there exists a unique equilibrium regardless of $\rho$ or $T$ and characterized by the coupled FBSDE system \eqref{eq:equilibrium_system}.
	The Ricatti system is then decoupled from $\sigma$ and simplifies to
	\begin{equation*}
		\begin{cases}
			P(t)       & = \displaystyle 1 -\int_{t}^{T} P\left(\frac{\rho P}{c} +1  \right) ds - \int_t^T Z^P dX                                                   \\
			\Lambda(t) & = \displaystyle \mathfrak{s} - \int_t^T \left( \Lambda \left(\frac{\rho P}{c} + 1\right) -\frac{cm}{\rho}\right)dt - \int_t^T Z^\Lambda dX \\
			Q(t)       & = \displaystyle \rho \int_0^t \left(\frac{\Lambda}{c} - \frac{P}{c} Q\right) ds.
		\end{cases}
	\end{equation*}
	In particular,
	\begin{equation*}
		F-S(0) = \rho \Lambda(0).
	\end{equation*}
\end{theorem}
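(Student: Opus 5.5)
The plan is to decouple the system in three stages. With $\phi=0$ the retailer's problem no longer involves $\sigma$, so we first solve the linear-quadratic control problem for a given $\mu$. Then we close the spot loop through $\mu=c(m-q)$. Finally we close the forward loop through the scalar equation $\mathfrak{s}=\mathfrak{d}(F)$.

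\textbf{Step 1: retailer optimality.} Fix $\mathfrak{s}$ and $\mu\in\mathbb{S}^\infty$. The functional $J(\cdot,\mathfrak{s})$ is strictly concave in $q$ because $c>0$ and $\rho>0$. So optimality of $q^\ast$ is equivalent to the first-order condition obtained from the Gateaux derivative in a direction $\delta q$:
\begin{equation*}
E\Big[\int_0^T \delta q\Big(\int_t^T\mu\,ds-\rho(Q(T)-\mathfrak{s})-cq\Big)dt\Big]=0 .
\end{equation*}
By martingale representation this is exactly the backward equation for $cq$ in \eqref{eq:equilibrium_system} with $\phi=0$. Taking expectations of the terms involving $F$ and $S(T)=\mathfrak{G}$ in the optimization over $\mathfrak{s}$ gives the condition $E[Q(T)]=\mathfrak{s}-\frac{1}{\rho}(F-E[\mathfrak{G}])$.

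\textbf{Step 2: spot loop.} Substituting $\mu=c(m-q)$ turns the backward equation into a linear FBSDE for $(Q,cq)$. To solve it we use the ansatz $cq=\rho(\Lambda-PQ)$ and apply Itô's formula, matching the $Q$-coefficients and the constant terms. This produces the stated Riccati BSDE for $P$ and the linear BSDE for $\Lambda$, and the forward equation for $Q$ follows directly.

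\textbf{Main obstacle: the Riccati BSDE.} The hard step is global existence of a bounded solution of
\begin{equation*}
P=1-\int_t^T P(\rho P/c+1)\,ds-\int_t^T Z^P\,dX
\end{equation*}
for every $\rho$ and $T$, including under jumps. It is the only place the smallness condition of Theorem~\ref{thm:small_risk_aversion} could be needed, and the claim is that it is not. I would try the following approach.
\begin{itemize}
\item Read the driver backward in time as $dP=P(\rho P/c+1)\,dt+dM$. With $P(T)=1>0$, $P$ decreases backward from $1$.
\item Truncate the driver to $\rho (P\wedge 1)^+ P/c+P$, which is Lipschitz, and solve the resulting BSDE with jumps.
\item Obtain the a priori bounds $0\le P\le1$ by comparison. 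Comparison with jumps requires monotonicity of the jump part of the generator, which holds here since the generator does not depend on $Z$. The lower solution is $0$. For the upper bound, $1$ is a supersolution because the driver is nonnegative for $P\ge0$.
\item Conclude that the truncation is inactive, which gives global existence. Uniqueness follows from a Lipschitz argument on $[0,1]$.
\end{itemize}
The linear BSDE for $\Lambda$ with bounded coefficient $\rho P/c+1$ and bounded source $cm/\rho$ is then standard. It has an explicit solution via the exponential weight $\Gamma(t,s)=\exp(-\int_t^s(\rho P/c+1))$, namely
\begin{equation*}
\Lambda(t)=E_t\Big[\Gamma(t,T)\mathfrak{s}+\int_t^T\Gamma(t,s)\frac{cm}{\rho}\,ds\Big].
\end{equation*}
The forward equation for $Q$ is a linear ODE pathwise. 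Setting $\sigma$ by martingale representation of $\mathfrak{G}-\int_0^T\mu\,ds$ gives $S$, and $\sigma\in\mathbb{H}_{BMO}$ because $\mathfrak{G}$ and $\mu$ are bounded.

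\textbf{Step 3: forward loop.} Define $\tilde Q$ by spot clearing. Arbitrageur optimality then holds by construction.
\begin{itemize}
\item $\Lambda$, $P$ and $Q$ are affine in $\mathfrak{s}$, so $E[Q(T)]=a+b\mathfrak{s}$ with $b=E[\Gamma\text{-weighted terms}]\in(0,1)$, strictly below $1$ because of the discounting by $\Gamma$.
\item The condition $E[Q(T)]=\mathfrak{s}-\frac{1}{\rho}(F-E[\mathfrak{G}])$ therefore gives $\mathfrak{s}=(F-E[\mathfrak{G}]+\rho a)/(\rho(1-b))$, which is strictly increasing and continuous in $F$.
\item Since $\mathfrak{d}$ is strictly decreasing and continuous, $\mathfrak{s}(F)=\mathfrak{d}(F)$ has at most one root. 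This requires surjectivity or range assumptions on $\mathfrak{d}$ for existence, which I assume are implicit.
\end{itemize}

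\textbf{The identity $F-S(0)=\rho\Lambda(0)$.} Take expectations in $S(0)=E[\mathfrak{G}]-E\int_0^T(cm-\rho(\Lambda-PQ))\,dt$. Then combine with $F=E[\mathfrak{G}]+\rho(\mathfrak{s}-E[Q(T)])$. Applying Itô's formula to $\Lambda-PQ$ gives $d(\Lambda-PQ)=(\Lambda-PQ-cm/\rho)\,dt+dM$, using the Riccati dynamics together with $cq=\rho(\Lambda-PQ)$. Integrating from $0$ to $T$ with $\Lambda(T)-P(T)Q(T)=\mathfrak{s}-Q(T)$ and $Q(0)=0$ yields the identity after taking expectations.
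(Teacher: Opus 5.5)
Your proposal follows the paper's proof essentially step by step. First-order conditions give the FBSDE; you, more explicitly than the paper, obtain $E[Q(T)]=\mathfrak{s}-\frac{1}{\rho}(F-E[\mathfrak{G}])$ as the first-order condition in $\mathfrak{s}$. The ansatz $cq=\rho(\Lambda-PQ)$ then yields the Riccati/linear BSDE pair. The Riccati BSDE is handled by truncation and comparison; your remark that comparison with jumps is safe because the generator does not depend on $Z$ is a point the paper passes over. Finally, the supply curve is affine in $F$, and $F-S(0)=\rho\Lambda(0)$ follows by integrating $d(\Lambda-PQ)=-\frac{\mu}{\rho}\,dt+dM$, exactly as in the paper. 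Two steps of your forward loop need repair, and the first leaves the existence claim unproved as you wrote it.

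\textbf{Existence of $F$.} Existence of the forward price needs no range or surjectivity assumption on $\mathfrak{d}$; by adding one you prove less than the theorem states. Since $\mathfrak{s}(F)$ is affine with slope $1/(\rho(1-b))>0$, the map $G(F):=\mathfrak{s}(F)-\mathfrak{d}(F)$ is continuous and strictly increasing. Monotonicity of $\mathfrak{d}$ gives $G(F)\geq\mathfrak{s}(F)-\mathfrak{d}(0)\to+\infty$ as $F\to+\infty$ and $G(F)\leq\mathfrak{s}(F)-\mathfrak{d}(0)\to-\infty$ as $F\to-\infty$. The intermediate value theorem then gives the unique root.

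\textbf{The bound $b<1$.} The sign of that slope rests on $b<1$, which you only assert. The relevant discount is the kernel of the $Q$-equation rather than $\Gamma$. Set $O(t):=\rho\int_0^tP/c\,du$. Then $b=\rho E[\alpha(T)]$ and, pathwise, $\rho\alpha(T)=\int_0^TO'(s)e^{-(O(T)-O(s))}\,ds=1-e^{-O(T)}$, hence $1-b=E[e^{-O(T)}]>0$. Because $O(T)=H(T)-T$ and $P(0)=E[e^{-H(T)}]$, this is precisely the paper's identity $1-\rho E[\alpha(T)]=e^TP(0)$. This route avoids the auxiliary relation $\rho E[\int_0^Te^{-s}P^2/c\,ds]=e^{-T}-P(0)$ and uses only boundedness of $P$, which matters because your comparison gives $P\geq0$, not $P>0$.

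\textbf{Uniqueness.} Neither you nor the paper checks that every solution of the coupled FBSDE has the ansatz form. This follows because $Y:=cq-\rho(\Lambda-PQ)$ satisfies $dY=(\rho P/c+1)Y\,dt+(Z-\rho Z^\Lambda+\rho QZ^P)\,dX$ with $Y(T)=0$, so $Y\equiv0$ for bounded $q$.
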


The no-risk-aversion case is sufficiently explicit to yield semi-closed formulas and to make the equilibrium mechanism fully transparent.
Once $P$ is known, the BSDE for $\Lambda$ is linear, hence affine in the forward supply $\mathfrak{s}$, while the equation for $Q$ is a linear stochastic ODE driven by $\Lambda$.
As a result, both $\Lambda$ and $Q$ are affine in $\mathfrak{s}$, which in turn implies that the equilibrium supply curve $\mathfrak{s}(F)$ is itself affine in the forward price $F$.
To make this structure explicit, introduce
\begin{align*}
	H(t)      & = \int_0^t \left(\frac{\rho P}{c} + 1\right)ds,
	          &
	\delta(t) & = e^{H(t)}E_t\left[\int_t^T e^{-H}cm\,ds\right],
	\\
	\alpha(t) & = e^{-H(t)+t}\int_0^t \frac{e^{-s}}{c} e^{H}P\,ds,
	          &
	\beta(t)  & = e^{-H(t)+t}\int_0^t \frac{e^{-s}}{c} e^{H} \delta\,ds.
\end{align*}
Then the linear BSDE for $\Lambda$ and the linear ODE for $Q$ can be solved explicitly as
\begin{equation*}
	\Lambda(t)=P(t)\mathfrak{s}+\frac{\delta(t)}{\rho},
	\qquad
	Q(t)=\rho\alpha(t)\mathfrak{s}+\beta(t).
\end{equation*}
The term $\delta$ captures the contribution of the supply rate $m$ to the retailer's continuation value, while $\beta$ is the corresponding contribution to the inventory process.
Substituting the affine form of $Q(T)$ into the clearing condition
\begin{equation*}
	E[Q(T)] = \mathfrak{s} - \frac{1}{\rho}(F-E[\mathfrak{G}])
\end{equation*}
gives
\begin{equation*}
	\rho E[\alpha(T)]\,\mathfrak{s} + E[\beta(T)] = \mathfrak{s} - \frac{1}{\rho}(F-E[\mathfrak{G}]).
\end{equation*}
Rearranging,
\begin{equation*}
	\bigl(1-\rho E[\alpha(T)]\bigr)\mathfrak{s} = E[\beta(T)] + \frac{1}{\rho}(F-E[\mathfrak{G}]).
\end{equation*}
To identify the slope explicitly, note that
\begin{equation*}
	\alpha(T)=e^{H(T)+T}\int_0^T \frac{e^{-s}}{c}e^{H}P\,ds
\end{equation*}
and, from the Riccati equation for $P$,
\begin{equation*}
	de^{-t}P = \rho e^{-t}\frac{P^2}{c}dt + e^{-t}Z^P\,dX.
\end{equation*}
Taking expectations and integrating from $0$ to $T$ gives
\begin{equation*}
	\rho E\left[\int_0^T \frac{e^{-s}}{c}P^2\,ds\right] = e^{-T} - P(0).
\end{equation*}
Since $P(t)=e^{H(t)}E_t[e^{-H(T)}]$, the same conditioning argument as in the proof below yields
\begin{equation*}
	1-\rho E[\alpha(T)] = 1-\rho e^T E\left[\int_0^T \frac{e^{-s}}{c}P^2\,ds\right] = e^T P(0)>0.
\end{equation*}
resulting into an affine strictly decreasing endogenous supply curve
\begin{equation*}
	\mathfrak{s}(F) = e^{-T}\frac{E[\beta(T)]}{P(0)} + \frac{e^{-T}}{\rho P(0)}\big(F-E[\mathfrak{G}]\big).
\end{equation*}
Thus higher forward prices induce greater dealer supply, and the equilibrium forward price is obtained by intersecting this increasing affine supply curve with the decreasing investor demand curve $\mathfrak{d}(F)$.
In particular, when demand is perfectly inelastic, market clearing fixes $\mathfrak{s}=\bar{\mathfrak{d}}$, so the affine supply relation can be solved immediately for the equilibrium forward price.
\begin{corollary}[Constant Forward Demand]\label{cor:linear_demand}
	Assume $\phi=0$ and $\mathfrak{d}(F) \equiv \bar{\mathfrak{d}}$.
	Then the equilibrium forward and spot prices at time $0$ are given by
	\begin{align*}
		F    & = \displaystyle E[\mathfrak{G}] + \rho \left[e^TP(0)\bar{\mathfrak{d}} - E[\beta(T)]\right],
		\\
		S(0) & = \displaystyle E[\mathfrak{G}] +(1-e^{-T})\rho \left[e^TP(0)\bar{\mathfrak{d}} - E[\beta(T)]\right]  - \rho e^{-T}E[\beta(T)]- \delta(0),
	\end{align*}
	so that
	\begin{equation*}
		F - S(0) = \rho P(0)\bar{\mathfrak{d}} + \delta(0).
	\end{equation*}
	In the homogeneous case $m \equiv 0$, one has $\delta \equiv 0$ and $\beta \equiv 0$, and the formulas reduce to
	\begin{align*}
		F    & = \displaystyle E[\mathfrak{G}] + \rho e^TP(0)\bar{\mathfrak{d}},
		\\
		S(0) & = \displaystyle E[\mathfrak{G}] +(1-e^{-T})\rho e^TP(0)\bar{\mathfrak{d}},
	\end{align*}
	with
	\begin{equation*}
		F - S(0) = \rho P(0)\bar{\mathfrak{d}}.
	\end{equation*}
\end{corollary}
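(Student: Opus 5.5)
The plan is to read everything off the affine supply curve derived just before the statement, together with the identity $F-S(0)=\rho\Lambda(0)$ from Theorem \ref{thm:existence_equilibrium_no_risk_aversion}. Since $\phi=0$, Theorem \ref{thm:existence_equilibrium_no_risk_aversion} gives existence and uniqueness of the equilibrium. With constant demand $\mathfrak{d}\equiv\bar{\mathfrak{d}}$, forward clearing forces $\mathfrak{s}=\bar{\mathfrak{d}}$. I will then substitute this into the rearranged clearing relation
\begin{equation*}
	\bigl(1-\rho E[\alpha(T)]\bigr)\mathfrak{s} = E[\beta(T)] + \frac{1}{\rho}\bigl(F-E[\mathfrak{G}]\bigr),
\end{equation*}
using the identity $1-\rho E[\alpha(T)]=e^TP(0)$ established above. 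Solving for $F$ gives $F=E[\mathfrak{G}]+\rho\bigl[e^TP(0)\bar{\mathfrak{d}}-E[\beta(T)]\bigr]$ immediately.

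For the spot price, I will use $F-S(0)=\rho\Lambda(0)$ together with the explicit solution $\Lambda(t)=P(t)\mathfrak{s}+\delta(t)/\rho$ of the linear BSDE for $\Lambda$. This yields $F-S(0)=\rho P(0)\bar{\mathfrak{d}}+\delta(0)$, and hence
\begin{equation*}
	S(0)=E[\mathfrak{G}]+\rho e^TP(0)\bar{\mathfrak{d}}-\rho P(0)\bar{\mathfrak{d}}-\rho E[\beta(T)]-\delta(0).
\end{equation*}
To reach the stated form, I will expand $(1-e^{-T})\rho\bigl[e^TP(0)\bar{\mathfrak{d}}-E[\beta(T)]\bigr]-\rho e^{-T}E[\beta(T)]$. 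It equals $\rho e^TP(0)\bar{\mathfrak{d}}-\rho P(0)\bar{\mathfrak{d}}-\rho E[\beta(T)]$, because the two $e^{-T}E[\beta(T)]$ terms cancel. This matches the expression above term by term.

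The only point needing care is justifying $\Lambda=P\mathfrak{s}+\delta/\rho$. I would verify it directly: apply It\^o's formula to $e^{-H}\Lambda$, with $H=\int_0^\cdot(\rho P/c+1)\,ds$, to get
\begin{equation*}
	e^{-H(t)}\Lambda(t)=E_t\Bigl[e^{-H(T)}\mathfrak{s}+\tfrac{1}{\rho}\int_t^Te^{-H}cm\,ds\Bigr].
\end{equation*}
Then use $P(t)=e^{H(t)}E_t[e^{-H(T)}]$, which follows from the same computation applied to $P$ with terminal value $1$. The stochastic integrals are true martingales because $P$ is bounded and positive, $1/c$ and $cm$ are bounded, and $H$ is therefore bounded, so the localization is routine.

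For the homogeneous case $m\equiv0$, the definition of $\delta$ gives $\delta\equiv0$, hence $\beta\equiv0$ by its definition. The formulas then reduce to the stated ones, with $F-S(0)=\rho P(0)\bar{\mathfrak{d}}$. I expect no real obstacle; the main risk is bookkeeping of the $e^{\pm T}$ factors in the $S(0)$ expression.
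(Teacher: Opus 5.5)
Your proposal is correct and follows the paper's own route. You set $\mathfrak{s}=\bar{\mathfrak{d}}$ in the affine clearing relation, use $1-\rho E[\alpha(T)]=e^TP(0)$ to solve for $F$, and then use $F-S(0)=\rho\Lambda(0)=\rho P(0)\bar{\mathfrak{d}}+\delta(0)$ to get $S(0)$; your bookkeeping of the $e^{\pm T}$ terms checks out. One small point: constant demand is not strictly decreasing, so Theorem \ref{thm:existence_equilibrium_no_risk_aversion} does not literally apply for uniqueness of $F$. Instead, uniqueness follows because the strictly increasing affine supply curve meets $\mathfrak{s}=\bar{\mathfrak{d}}$ exactly once, which your solving step already shows.
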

\begin{remark}
	In the no-risk-aversion case, the forward-spot spread admits a natural economic decomposition into a hedging-cost component and a cost-weighted supply component.
	Indeed, as the Riccati equation shows, $P(0)$ depends only on the transaction-cost process $c$ and not on the supply rate $m$.
	Moreover, $P(0)$ is increasing in $c$: higher spot trading costs make it more expensive for a dealer to dynamically build and unwind the hedge associated with a forward position, so the term $\rho P(0)\mathfrak{s}$ raises the premium of the forward over the spot price.
	By contrast, once $P$ is known, the second component is given by
	\begin{equation*}
		\delta(0) = E\left[\int_0^T e^{-H(s)}c(s)m(s)\,ds\right],
	\end{equation*}
	so it depends on the supply rate only through the product $cm$.
	This shows that supply does not matter in isolation: what matters for pricing is the cost of absorbing that supply in the spot market.
	For fixed $c$, the quantity $\delta(0)$ is linear in $m$, so a larger supply rate increases the spread only insofar as dealers and arbitrageurs must intermediate larger inventories.
	Conversely, for fixed $m$, larger transaction costs amplify the contribution of supply to the spread.
	Thus higher transaction costs increase the spread directly through hedging frictions and indirectly by magnifying the price impact of supply.
	In this sense, the relevant state variable for forward pricing is not supply alone, but cost-weighted supply.
\end{remark}

\section{CNY/CNH Application}
We apply the no-risk-aversion representation to the CNY/CNH forward wedge.
The onshore market is denoted by $Y$ and the offshore market by $H$.
Both venues trade the same terminal RMB payoff $\mathfrak G$; forward demand is common and constant, $\mathfrak d^Y=\mathfrak d^H=\bar{\mathfrak d}$; and the spot supply rate is the same constant $m$ in both venues.
The application therefore attributes the wedge to heterogeneous hedging conditions, not to cross-market differences in forward demand or supply.

For each venue $i\in\{Y,H\}$, the no-risk-aversion representation gives the forward--spot premium
\begin{equation*}
	F^i-S^i(0)=\rho P^i(0)\bar{\mathfrak d}+\delta^i(0).
\end{equation*}
Here $P^i(0)$ is the marginal dynamic hedging coefficient and $\delta^i(0)$ is the cost-weighted inventory component.
Both are venue-specific because both depend on the local spot-trading cost process.
With $\Delta X:=X^Y-X^H$, spot parity is the constraint
\begin{equation*}
	(e^T-1)\rho\Delta P(0)\bar{\mathfrak d}-\rho E[\Delta\beta(T)]-\Delta\delta(0)=0.
\end{equation*}
It equates the spot effect of the difference in dynamic hedging costs, represented by $\Delta P(0)$ and $\Delta\beta(T)$, with the difference in cost-weighted inventory terms.
Under the same constraint, the forward differential is
\begin{equation*}
	F^Y-F^H=\rho\Delta P(0)\bar{\mathfrak d}+\Delta\delta(0).
\end{equation*}
Thus equality of spot prices restricts the hedging and inventory coefficients, but it leaves the forward wedge equal to the difference between two equilibrium forward premia.
The empirical relation $F^Y<F^H$ means that the offshore forward premium must be larger than the onshore one, even though offshore trading is cheaper in ordinary conditions.

The deterministic benchmark makes this tension explicit.
Assume constant costs with the normal ordering
\begin{equation*}
	c^H<c^Y.
\end{equation*}
The large-$\rho$ regime corresponds to a retailer facing a strong terminal mismatch penalty, so that the forward position is hedged as a contract to be delivered rather than left to be covered by a terminal spot purchase.
In this regime, the formulas derived in Section 4 imply that spot parity yields
\begin{equation*}
	F^Y-F^H \sim m(c^Y-c^H).
\end{equation*}
Since $m>0$ and $c^Y>c^H$, the deterministic model predicts $F^Y>F^H$.
The benchmark therefore fails as empirically the onshore forward is persistently cheaper than the offshore one.

Random offshore stress changes the relevant marginal hedging cost without breaking spot parity.
We keep the onshore cost constant and let the offshore cost jump at the first arrival time $\tau$ of a Poisson process with intensity $\lambda$:
\begin{equation*}
	c^Y(t)\equiv c^Y,
	\qquad
	c^H(t)=\underline{c}^H\mathbf{1}_{\{t<\tau\}}+\overline{c}^H\mathbf{1}_{\{t\geq\tau\}},
	\qquad
	\underline{c}^H<c^Y<\overline{c}^H.
\end{equation*}
The offshore venue is cheaper in the normal state, $\underline{c}^H<c^Y$, but becomes more expensive in the stress state, $\overline{c}^H>c^Y$.
At time $0$, before any stress event, the offshore hedging coefficient already incorporates the possibility of the future transition.
We write $\underline{P}^H$ for the offshore coefficient in the normal pre-stress state and $\overline{P}^H$ for the corresponding coefficient after the stress event.
Section 4 gives the following large-$\rho$ asymptotics for the scaled normal-state coefficient:
\begin{equation*}
	(\rho\underline{P}^H)'(t)
	\sim
	\rho\underline{P}^H(t)\left(\frac{\rho\underline{P}^H(t)}{\underline{c}^H}+1+\lambda\right)
	-\lambda\frac{\overline{c}^H e^{t-T}}{1-e^{t-T}},
	\qquad
	\rho\overline{P}^H(t)\sim\frac{\overline{c}^H e^{t-T}}{1-e^{t-T}}.
\end{equation*}
Similarly, writing $\underline{\delta}^H$ and $\overline{\delta}^H$ for the normal and stressed supply components,
\begin{equation*}
	\underline{\delta}^H(t)=e^{\underline{H}^H(t)}\int_t^T e^{-\underline{H}^H(s)}\left(\lambda\overline{\delta}^H(s)+\underline{c}^H m\right)ds.
\end{equation*}
Here $\underline{H}^H(t)\sim\int_0^t(\rho\underline{P}^H(s)/\underline{c}^H+1+\lambda)ds$.
The stress-arrival terms involving $\lambda$ are precisely what is absent from the deterministic benchmark.
Combining the normal-state offshore coefficients with the deterministic onshore coefficients and imposing spot parity gives
\begin{equation*}
	F^Y-F^H
	\sim
	\left(\frac{c^Y}{e^T-1}-\rho\underline{P}^H(0)\right)\bar{\mathfrak d}
	+\delta^Y(0)-\underline{\delta}^H(0).
\end{equation*}
When $\lambda=0$, the stress state is never reached and $\rho\underline{P}^H(0)\sim\underline{c}^H/(e^T-1)$, so this expression collapses to the deterministic empirically problematic formula above.
For $\lambda>0$, the offshore premium is instead governed by the whole state-contingent hedging technology.
The offshore forward can therefore be expensive today because dealers price the expected cost of hedging through a market that may become illiquid before maturity.

The calibration treats this mechanism as an inverse pricing problem.
For fixed $m$, $c^Y$, and normal offshore cost $\underline{c}^H$, each target level differential $x$ determines stress parameters through
\begin{equation*}
	S^Y(0)=S^H(0;\lambda,\overline{c}^H),
	\qquad
	\bigl(F^Y-F^H\bigr)(\lambda,\overline{c}^H)=x.
\end{equation*}
The intensity $\lambda$ corresponds to the market-implied stress probability $1-e^{-\lambda T}$ over the horizon, while $\overline{c}^H$ measures the market-implied severity of the offshore cost deterioration.
These are pricing-implied quantities, not physical estimates of crisis frequency.
The introduction reports annualized log forward ratios $r_T^{Y/H}$; the numerical exercise uses normalized one-year level targets $x=F^Y-F^H$ in the model units.
Figure \ref{fig:jump_target_calibration} and Table \ref{table:jump_target_calibration} report the resulting map $x\mapsto(\lambda,\overline{c}^H)$.
The deterministic benchmark corresponds to $\lambda=0$ and gives a positive forward differential; negative target wedges require positive stress intensity.
The calibration summarizes the mechanism: the empirical finding $F^Y<F^H$ is compatible with spot parity when the offshore forward embeds the risk of future CNH liquidity stress.

\begin{figure}[h!]
	\centering
	\includegraphics[width=\textwidth]{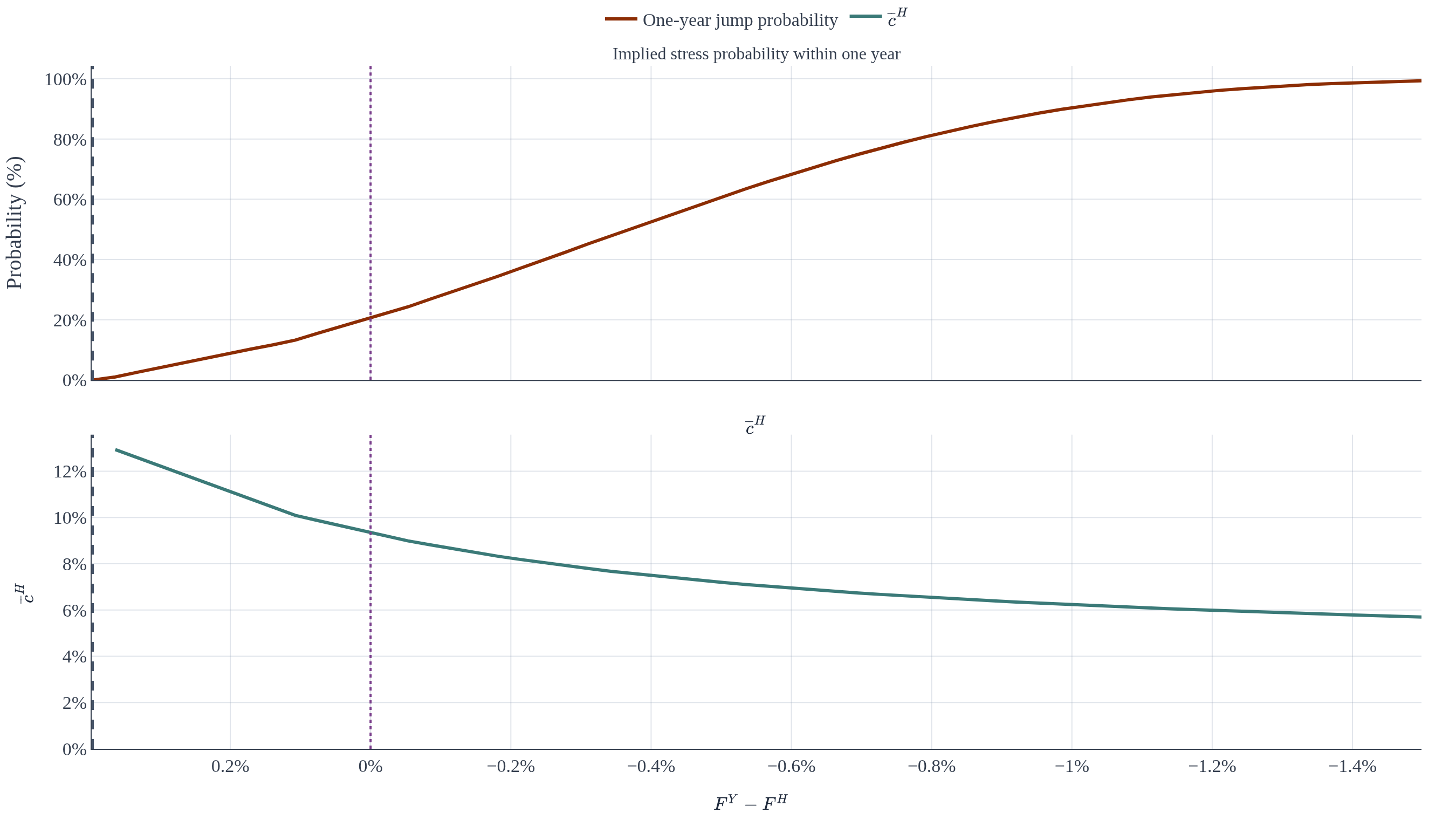}
	\caption{Inverse calibration of market-implied offshore stress. For each normalized one-year target level differential $F^Y-F^H$, the curve reports the pair $(\lambda,\overline{c}^H)$ that restores spot parity, holding fixed the common supply rate $m$, the onshore cost $c^Y$, and the normal-times offshore cost $\underline{c}^H$. The deterministic benchmark corresponds to $\lambda=0$.}
	\label{fig:jump_target_calibration}
\end{figure}

\begin{table}[h!]
	\begin{center}
		\begin{tabular}{@{}ccc@{}}
			\toprule
			Target $F^Y-F^H$ & Implied $\lambda$ & Implied $\overline{c}^H$  \\
			\midrule
			0.40\%           & 0.000             & irrelevant at $\lambda=0$ \\
			0.20\%           & 0.091             & 11.14\%                   \\
			0.11\%           & 0.142             & 10.09\%                   \\
			0.01\%           & 0.222             & 9.42\%                    \\
			-0.50\%          & 0.941             & 7.19\%                    \\
			-0.98\%          & 2.286             & 6.25\%                    \\
			\bottomrule
		\end{tabular}
		\caption{Implied jump beliefs consistent with spot parity and a normalized one-year target level differential. The deterministic benchmark corresponds to $\lambda=0$, in which case $\overline{c}^H$ is irrelevant.}
		\label{table:jump_target_calibration}
	\end{center}
\end{table}

\clearpage

\section{Proofs and Technical Derivations}

This final section is devoted to the proofs of the equilibrium results stated in Section 2 and to the technical derivations used in the CNY/CNH application.
We first isolate the retailer's optimization problem and the equilibrium reduction.
We then prove Theorem \ref{thm:existence_equilibrium_no_risk_aversion}, derive the specialized CNY/CNH systems used in Section 3, collect the technical lemmas used for the perturbative argument, and finally prove Theorem \ref{thm:small_risk_aversion}.

Assume for now that $c \in \mathbb{S}^\infty$ is strictly positive and bounded away from $0$, $\mu \in \mathbb{S}^\infty$, and $\sigma \in \mathbb{H}_{BMO}$.
\begin{proposition}[Retailer Optimization and Riccati Representation]
	\label{prop:retailer_fbsde}
	Let be given forward quantity $\mathfrak{s}$, forward price $F$ and spot price $S$ with $\mu$ in $\mathbb{S}^\infty$ and $\sigma$ in $\mathbb{H}_{BMO}$.
	The retailer's optimization problem admits a unique optimal strategy $q \in \mathbb{S}^\infty$ if and only if the FBSDE

	\begin{equation}\label{eq:retailer_fbsde}
		\begin{cases}
			Q(t)      & = \displaystyle \int_0^t q ds                                                                           \\
			c(t) q(t) & = \displaystyle -\rho(Q(T)-\mathfrak{s}) + \int_t^T \left( \mu -\phi \sigma^2Q \right)ds - \int_t^T ZdX
		\end{cases}
	\end{equation}
	has a unique solution with $q$ in $\mathbb{S}^\infty$ and $Z$ in $\mathbb{H}_{BMO}$.
	Moreover, this solution admits the Riccati representation
	\begin{equation*}
		cq = \rho(\Lambda - PQ)
	\end{equation*}
	where $(P, Z^P)$ and $(\Lambda, Z^\Lambda)$ are the unique solutions in $\mathbb{S}^\infty \times \mathbb{H}_{BMO}$ of the BSDEs
	\begin{equation}\label{eq:ricatti_retailer}
		\begin{cases}
			P(t)       & = \displaystyle 1 - \int_t^T \left(\frac{\rho P^2}{c} - \frac{\phi\sigma^2}{\rho}\right) ds - \int_t^T Z^P dX                  \\
			\Lambda(t) & = \displaystyle \mathfrak{s} - \int_t^T \left(\frac{\rho P}{c}\Lambda(s) - \frac{\mu}{\rho}\right) ds - \int_t^T Z^\Lambda dX.
		\end{cases}
	\end{equation}
\end{proposition}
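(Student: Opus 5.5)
The plan is to treat the retailer's problem as a strictly concave quadratic optimization over $q\in\mathbb{H}^2$, read off the first-order condition as the FBSDE \eqref{eq:retailer_fbsde}, and then decouple that FBSDE with the affine ansatz $cq=\rho(\Lambda-PQ)$.

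\textbf{Concavity and first-order condition.} Since $\mathfrak{s}$, $F$, and $S$ are fixed, the term $\mathfrak{s}(F-E[\mathfrak{G}])$ is a constant. The map $q\mapsto J(q,\mathfrak{s})$ is then a linear term $E[\int_0^T Q\mu\,dt]$ minus a nonnegative quadratic form in $(q,Q)$. This quadratic form is strictly convex because $c$ is bounded away from $0$, and it is coercive on $\mathbb{H}^2$. Hence $J$ is strictly concave, and any maximizer is unique.

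To get the first-order condition, take a direction $\eta$ with $N=\int\eta\,ds$ and compute the Gateaux derivative:
\begin{equation*}
	E\Bigl[\int_0^T N\bigl(\mu-\phi\sigma^2Q\bigr)dt-\rho(Q(T)-\mathfrak{s})N(T)-\int_0^T cq\eta\,dt\Bigr].
\end{equation*}
Rewrite $N(T)=\int_0^T\eta\,dt$ and apply Fubini to $\int_0^T N(\mu-\phi\sigma^2 Q)\,dt=\int_0^T\eta(s)\int_s^T(\mu-\phi\sigma^2Q)\,dt\,ds$. The derivative then takes the form $E[\int_0^T\eta(s)\,Y(s)\,ds]$, where
\begin{equation*}
	Y(s)=-\rho(Q(T)-\mathfrak{s})+\int_s^T(\mu-\phi\sigma^2Q)\,dt-c(s)q(s).
\end{equation*}
Since $\eta$ is adapted, conditioning on $\mathcal{F}_s$ shows that the derivative vanishes for all $\eta$ if and only if
\begin{equation*}
	c(s)q(s)=E_s\Bigl[-\rho(Q(T)-\mathfrak{s})+\int_s^T(\mu-\phi\sigma^2Q)\,dt\Bigr]
\end{equation*}
for all $s$. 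Martingale representation with respect to $X=(W,\tilde N)$ turns this into exactly the backward equation in \eqref{eq:retailer_fbsde}. By concavity, the first-order condition is both necessary and sufficient for optimality. This gives the equivalence between optimal strategies and solutions of the FBSDE, modulo the integrability classes: I would check that $q\in\mathbb{S}^\infty$ and $Z\in\mathbb{H}_{BMO}$ follow from boundedness of the right-hand side.

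\textbf{Riccati decoupling.} Insert the ansatz $cq=\rho(\Lambda-PQ)$ with $dQ=q\,dt$. Apply Itô's formula to $\rho(\Lambda-PQ)$ and match the result against the dynamics of $cq$ given by the backward equation, namely $d(cq)=-(\mu-\phi\sigma^2Q)\,dt+Z\,dX$, with terminal value $cq(T)=-\rho(Q(T)-\mathfrak{s})$. Matching coefficients of $Q$ gives the drift of $P$, namely $\rho P^2/c-\phi\sigma^2/\rho$, with terminal value $P(T)=1$. The $Q$-free terms give the drift of $\Lambda$, namely $\rho P\Lambda/c-\mu/\rho$, with terminal value $\Lambda(T)=\mathfrak{s}$. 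The martingale part is $Z=\rho(Z^\Lambda-QZ^P)$. This reproduces \eqref{eq:ricatti_retailer}.

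Conversely, suppose $(P,\Lambda)$ solve \eqref{eq:ricatti_retailer}. Then $Q$ solves the linear random ODE $\dot Q=\rho(\Lambda-PQ)/c$, which has a unique bounded solution, and reversing the Itô computation shows that the resulting $(Q,q,Z)$ solves \eqref{eq:retailer_fbsde}. Uniqueness of the FBSDE solution follows from uniqueness of the optimizer, or directly from the monotonicity (strict concavity) of the first-order system.

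\textbf{Main obstacle: well-posedness of the Riccati BSDE for $P$.} The generator of $P$ is quadratic in $P$ with a jump component, so standard Lipschitz theory does not apply. My approach would be as follows.
\begin{itemize}[fullwidth]
	\item \emph{Upper bound.} Since $\rho P^2/c\ge 0$ and $\phi\ge0$, compare with the linear BSDE obtained by dropping the quadratic term. Taking conditional expectations gives $P(t)\le 1+\frac{\phi}{\rho}E_t\bigl[\int_t^T\sigma^2ds\bigr]$, which is bounded by the BMO norm of $\sigma$.
	\item \emph{Lower bound.} Since $-\phi\sigma^2/\rho\le 0$, comparison with the deterministic Riccati ODE $p'=\rho p^2/\|1/c\|_\infty^{-1}$, i.e.\ with $c$ replaced by its lower bound, gives $P>0$.
	\item \emph{Existence.} Truncate the generator to $\rho(P\wedge K)^2/c$, which is Lipschitz. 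The truncated equation is well posed, and the a priori bounds above show that the truncation is inactive for $K$ large, so the truncated solution solves the original equation.
	\item \emph{Comparison with jumps.} The comparison arguments require care in the jump setting. I would justify them by writing $P(t)=E_t[\cdots]$ after an exponential change of variables, as in the formula $P=e^{H}E_t[e^{-H(T)}]$ used in Section 2.
\end{itemize}
Once $P$ is bounded, the equation for $\Lambda$ is a linear BSDE with bounded coefficients, so it is explicitly solvable and unique. Finally, $Z^P,Z^\Lambda\in\mathbb{H}_{BMO}$ follows from the standard energy estimate for bounded solutions of BSDEs.
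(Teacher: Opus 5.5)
Your proposal is correct in substance and follows the paper's route step for step: concavity with the G\^ateaux derivative, Fubini and conditioning, then martingale representation, It\^o matching of the ansatz $cq=\rho(\Lambda-PQ)$, truncation plus comparison for the quadratic BSDE in $P$, linear theory for $\Lambda$, the explicit linear ODE for $Q$, and uniqueness through the unique maximizer. The one slip is the truncation: $\rho(P\wedge K)^2/c$ is not Lipschitz as $P\to-\infty$, so you must also truncate from below, e.g.\ $\rho(P^+\wedge K)^2/c$ as the paper does. Your comparison with $p'=\rho\|1/c\|_\infty p^2$, $p(T)=1$, is a self-contained replacement for the paper's comparison with the $\phi=0$ solution; it gives $P\geq(1+\rho T\|1/c\|_\infty)^{-1}>0$, and together with your upper bound it makes both truncations inactive once $K>1+\frac{\phi}{\rho}\|\sigma\|_{\mathbb{H}_{BMO}}^2$.
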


\begin{proof}
	Since the objective function $\mathcal{J}$ is strictly convex in $q$, there exists a unique solution if and only if there exists $q$ setting the G\^ateaux derivative in any direction to $0$.
	Computing the G\^ateaux derivative of $\mathcal{J}$ in the direction $q^\prime$ yields
	\begin{multline*}
		\lim_{\varepsilon \searrow 0} \frac{\mathcal{J}(q + \varepsilon q^\prime, \mathfrak{s}) - \mathcal{J}(q, \mathfrak{s}) }{\varepsilon }
		\\
		= E \left[  -\rho \left( Q(T) - \mathfrak{s} \right) \left(\int_0^T  q^\prime dt\right) - \int_0^T c q q^\prime dt + \int_0^T  \left(\mu - \phi \tilde{\sigma}^2 Q  \right)  \left(\int_0^t q^\prime ds\right)dt\right]\\
		=  E  \left[\int_0^T E_t \left[  -\rho \left( Q(T) - \mathfrak{s} \right)  - cq + \int_t^T  \left(  \mu   - \phi \sigma^2  Q  \right)  ds  \right] q^\prime  dt \right]
	\end{multline*}
	where the last equality holds by Fubini's theorem and the tower property of the conditional expectation.
	Since this has to hold for any perturbation $q^\prime$, it is therefore equivalent to
	\begin{equation*}
		c(t)q(t) =  E_t \left[  -\rho \left( Q(T) - \mathfrak{s} \right) +   \int_t^T  \left(  \mu   -  \phi \sigma^2  Q\right)  ds  \right]
	\end{equation*}
	having a unique solution.
	For such a solution $q$, martingale representation yields
	\begin{equation*}
		c(t)q(t) = -\rho \left( Q(T) - \mathfrak{s} \right) +  \int_t^T  \left(  \mu - \phi \sigma^2  Q \right)  ds - \int_t^T  Z dX
	\end{equation*}
	for $Z$ in $\mathbb{H}^2$.
	By the integrability assumptions on the coefficients, it follows that $Z$ is in $\mathbb{H}_{BMO}$, yielding the desired FBSDE representation.

	Conversely, assume that the FBSDE has a unique solution with $q$ and $Z$ in $\mathbb{S}^\infty$ and $\mathbb{H}_{BMO}$, respectively; taking conditional expectations and rearranging yields the G\^ateaux derivative condition.

	We divide the proof of Riccati representation into four steps.

	\textit{Step~1: Riccati ansatz.}
	We seek a solution of the form $cq = \rho(\Lambda - PQ)$ with terminal values $P(T) = 1$ and $\Lambda(T) = \mathfrak{s}$.
	Applying It\^o's formula to $Y = cq$ and using the forward equation $dQ = q dt = \frac{\rho}{c}(\Lambda - P Q)dt$ yields
	\begin{equation*}
		dY = \rho d\Lambda - \rho Q dP - \frac{\rho^2 P}{c}(\Lambda - PQ)dt.
	\end{equation*}
	On the other hand, differentiating the backward equation in \eqref{eq:retailer_fbsde} gives
	\begin{equation*}
		dY = -(\mu - \phi\sigma^2 Q)dt + ZdX.
	\end{equation*}
	Identifying the drifts and martingale parts yields the system \eqref{eq:ricatti_retailer}.

	\textit{Step~2: Existence for $P$.}
	The BSDE for $P$ has terminal condition $P(T) = 1$ and driver
	\begin{equation*}
		f^P(P, Z^P) = \frac{\rho P^2}{c} - \frac{\phi\sigma^2}{\rho}.
	\end{equation*}
	Since $c$ is uniformly bounded away from zero, $f^P$ has quadratic growth in $P$.
	Moreover, $\sigma \in \mathbb{H}_{BMO}$ provides the required BMO integrability control for the quadratic term.
	By the same truncation and comparison argument used in Lemma \ref{lemma:riccati_solution}, there exists a unique solution $(P, Z^P)$ in $\mathbb{S}^\infty \times \mathbb{H}_{BMO}$.

	\textit{Step~3: Existence for $\Lambda$ and construction of $Q$.}
	Given $P$ from Step~2, the BSDE in $\Lambda$ is linear with bounded coefficients since $P$ is in $\mathbb{S}^\infty$ and $c$ is bounded away from zero.
	By standard linear BSDE theory with bounded coefficients, there exists a unique solution $(\Lambda, Z^\Lambda)$ in $\mathbb{S}^\infty \times \mathbb{H}_{BMO}$.

	With $P$ and $\Lambda$ determined, define $Q$ by the linear stochastic ODE
	\begin{equation*}
		Q(t) = \rho\int_0^t \frac{\Lambda - PQ}{c}ds.
	\end{equation*}
	This has the explicit solution
	\begin{equation*}
		Q(t) = \rho\int_0^t \exp\left(-\rho\int_s^t \frac{P}{c}du\right)\frac{\Lambda}{c}ds,
	\end{equation*}
	which belongs to $\mathbb{S}^\infty$ since $P$ and $\Lambda$ are in $\mathbb{S}^\infty$ and $c$ is bounded away from zero.

	\textit{Step~4: Verification and uniqueness.}
	Set $q = \frac{\rho(\Lambda - PQ)}{c}$ and $Z = \rho(Z^\Lambda - QZ^P)$.
	A direct computation using It\^o's formula shows that $(Q, q, Z)$ satisfies \eqref{eq:retailer_fbsde}.
	The terminal condition follows from $P(T) = 1$ and $\Lambda(T) = \mathfrak{s}$.

	For uniqueness, suppose $(Q, q, Z)$ and $(\tilde{Q}, \tilde{q}, \tilde{Z})$ are two solutions.
	By the converse direction of Proposition~\ref{prop:retailer_fbsde}, both $q$ and $\tilde{q}$ are optimal strategies for the retailer's problem.
	Since the objective $\mathcal{J}$ is strictly convex, the optimal strategy is unique, hence $q = \tilde{q}$.
	This implies $Q = \tilde{Q}$ and, by the backward equation, $Z = \tilde{Z}$.
\end{proof}

\begin{remark}
	Although the retailer's FBSDE is now characterized for a given spot price, the full equilibrium system is substantially more difficult.
	Indeed, the coefficients $\mu$ and $\sigma$ are now coupled back into the spot-price BSDE, while the retailer equation itself remains quadratic through the term $\phi\sigma^2 Q$.

	Even if the dynamic problem were solved for a prescribed forward supply $\mathfrak{s} = \mathfrak{d}(F)$, one would still need to solve the static clearing condition.
	This introduces the additional constraint $E[Q(T)] = \mathfrak{s} - (F-E[\mathfrak{G}])/\rho$, so the forward price must be determined through the law of the dynamic solution.

	The strategy is therefore to first solve the case $\phi = 0$, where the quadratic coupling disappears, and then use the resulting estimates in a perturbative argument for small $\phi$.
\end{remark}
\subsection{Proof of Theorem \ref{thm:existence_equilibrium_no_risk_aversion}}
Throughout this subsection, we assume that $\phi = 0$ and prove the no-risk-aversion equilibrium theorem.

\begin{proof}
	In the case where $\phi = 0$, imposing the arbitrageur condition and market clearing on the retailer problem yields the coupled equilibrium system
	\begin{equation*}
		\begin{cases}
			S(t)         & = \displaystyle \mathfrak{G}-\int_t^T \mu ds - \int_t^T \sigma dX                     \\
			Q(t)         & = \displaystyle \int_0^t q ds                                                         \\
			c(t)q(t)     & = \displaystyle -\rho\left(Q(T)- \mathfrak{s}\right) + \int_t^T \mu ds -\int_t^T Z dX \\
			\mu          & = \displaystyle c(m - q)                                                              \\
			E[Q(T)]      & = \mathfrak{s} - \frac{1}{\rho}\left(F-E[\mathfrak{G}]\right)                         \\
			\mathfrak{s} & = \mathfrak{d}(F),
		\end{cases}
	\end{equation*}
	where the arbitrageur condition gives $\tilde q = m-q$.

	We make the separation of variables ansatz $cq = \rho(\Lambda - PQ)$ where $P$ and $\Lambda$ are It\^o processes with terminal values $P(T) = 1$ and $\Lambda(T) = \mathfrak{s}$.
	From It\^o's lemma, and since $dQ = q\,dt = \frac{\rho}{c}(\Lambda - PQ)\,dt$ is of bounded variation, it holds
	\begin{align*}
		d(\Lambda - PQ) & = d\Lambda - Q\,dP - P\,dQ                                                                                                         \\
		                & = \displaystyle \left(E^\Lambda - Q E^P - \rho\frac{P}{c}\Lambda + \rho\frac{P^2}{c}Q\right)dt + \left(Z^\Lambda - Z^{P}Q\right)dX
	\end{align*}
	On the other hand, differentiating the backward equation and using $\mu = c(m-q)$ yields
	\begin{align*}
		d(\Lambda - PQ) & = -\frac{\mu}{\rho}\,dt + \frac{Z}{\rho}\,dX                                      \\
		                & = \left(\left(\Lambda - PQ\right) - \frac{c}{\rho}m\right)dt + \frac{Z}{\rho}\,dX
	\end{align*}
	Identifying the drifts and martingale parts while collecting terms by powers of $Q$ gives
	\begin{equation*}
		\begin{cases}
			\displaystyle E^P            & = \displaystyle P\left(\frac{\rho P}{c} + 1\right)                        \\
			\displaystyle E^\Lambda      & = \displaystyle \Lambda\left(\frac{\rho P}{c} +1\right) - \frac{c}{\rho}m \\
			\displaystyle \frac{Z}{\rho} & = Z^\Lambda - Z^{P}Q
		\end{cases}
	\end{equation*}
	Showing that $P$, $\Lambda$ and $Q$ satisfy
	\begin{equation*}
		\begin{cases}
			P(t)       & = \displaystyle 1 -\int_t^T P\left(\frac{\rho P}{c} + 1\right) ds-\int_t^T Z^P\,dX                                                      \\
			\Lambda(t) & = \displaystyle \mathfrak{s} -\int_t^T\left(\Lambda\left(\frac{\rho P}{c} +1\right) - \frac{cm}{\rho}\right)ds - \int_t^T Z^\Lambda\,dX \\
			Q(t)       & = \displaystyle \rho \int_0^t \left(\frac{\Lambda}{c} - \frac{P}{c}Q \right) ds
		\end{cases}
	\end{equation*}

	Since $c$ is bounded away from zero and $\rho > 0$, the driver of the BSDE for $P$ has quadratic growth in $P$.
	By the $\phi=0$ case of Lemma \ref{lemma:riccati_solution}, there exists a unique solution $P \in \mathbb{S}^\infty$ with $0 < P < 1$ on $[0,T)$.
	Given $P$, the BSDE for $\Lambda$ is linear with bounded coefficients, hence admits a unique solution $\Lambda \in \mathbb{S}^\infty$.
	Finally, $Q$ is obtained from the linear ODE above, which has the explicit solution
	\begin{equation*}
		Q(t) = \rho\int_0^t \exp\left(-\rho\int_s^t \frac{P}{c}\,du\right)\frac{\Lambda}{c}\,ds.
	\end{equation*}

	The quantity
	\begin{equation*}
		H(t) = \int_0^t \left(\frac{\rho P}{c} + 1\right) ds
	\end{equation*}
	plays a central role.
	Denoting $Y(t) = E_t[e^{-(H(T)-H(t))}]$ it follows that $dY = Y(\rho P/c + 1)\,dt + Z\,dX$ with $Y(T) = 1$, showing that $Y = P$ and therefore
	\begin{equation*}
		P(t) = e^{H(t)} E_t\left[e^{-H(T)}\right].
	\end{equation*}
	Furthermore, by It\^o's lemma it follows that $d(e^{-t}P) = \rho e^{-t}P^2/c\,dt + e^{-t}Z\,dX$ showing that
	\begin{equation*}
		\rho E_t\left[\int_t^T \frac{e^{-s}}{c} P^2 ds\right] = e^{-T} - e^{-t}P(t).
	\end{equation*}

	Since $P$ is uniformly bounded and $\Lambda$ is an affine BSDE, it has the explicit solution
	\begin{equation*}
		\Lambda(t) = e^{H(t)}E_t\left[ \mathfrak{s}e^{-H(T)} + \int_t^T e^{-H}\frac{cm}{\rho}ds \right]
		= P(t)\mathfrak{s} + \frac{\delta(t)}{\rho}
	\end{equation*}
	where
	\begin{equation*}
		\delta(t) = e^{H(t)}E_t\left[\int_t^T e^{-H} cm\,ds\right].
	\end{equation*}

	With $Q(0) = 0$, given that $P$ and $\Lambda$ are uniformly bounded, it follows that
	\begin{equation*}
		Q(t) = \rho e^{-H(t)+t}\int_0^t e^{H - s}\frac{\Lambda}{c}\,ds = \rho \alpha(t) \mathfrak{s} + \beta(t)
	\end{equation*}
	where
	\begin{align*}
		\alpha(t) & = e^{-H(t)+t}\int_0^t \frac{e^{-s}}{c}e^{H}P\,ds
		          &
		\beta(t)  & = e^{-H(t)+t}\int_0^t \frac{e^{-s}}{c}e^{H}\delta\,ds.
	\end{align*}

	In particular, given that $P$ only depends on $\rho$ and $c$, the function $\Lambda$ depends on $c$, $\rho$ and $m$ and is affine in $\mathfrak{s}$, and so is $Q$.
	Since
	\begin{equation*}
		E[Q(T)] = \mathfrak{s} - \frac{F-E[\mathfrak{G}]}{\rho} = \rho E[\alpha(T)] \mathfrak{s} + E[\beta(T)]
	\end{equation*}
	it follows that
	\begin{equation*}
		\mathfrak{s}\left(1- \rho E[\alpha(T)]\right) = E[\beta(T)] + \frac{F-E[\mathfrak{G}]}{\rho}.
	\end{equation*}

	However
	\begin{multline*}
		1-\rho E[\alpha(T)]
		= 1-\rho e^{T}E\left[\int_0^T \frac{e^{-s}}{c}E_s\left[e^{-(H(T) - H)}\right] P ds\right]
		\\
		= 1- \rho e^{T}E\left[\int_0^T \frac{e^{-s}}{c}P^2 ds\right]
		= e^{T}P(0)> 0
	\end{multline*}
	Hence, we deduce that
	\begin{equation*}
		\mathfrak{s}(F) = e^{-T}\frac{E[\beta(T)]}{P(0)} + \frac{e^{-T}}{\rho P(0)}\left(F - E[\mathfrak{G}]\right)
	\end{equation*}
	which is a strictly increasing linear function of $F$.
	According to the assumptions on $\mathfrak{d}(F)$ which is strictly decreasing, there exists a unique $F$ satisfying $\mathfrak{d}(F) = \mathfrak{s}(F)$.

	As for the spot and forward together we have
	\begin{align*}
		S(t) & = \mathfrak{G} -\int_t^T \mu ds -\int_t^T \sigma dX                                                                     \\
		     & = \mathfrak{G} +\rho \int_t^T d(\Lambda - PQ) -\int_t^T \left(\sigma - Z\right) dX                                      \\
		     & = \mathfrak{G} +\rho \left(\mathfrak{s} - Q(T)\right) - \rho\left(\Lambda(t) - P(t)Q(t)\right) - \int_t^T (\sigma-Z) dX
	\end{align*}
	where we used $P(T) = 1$ and $\Lambda(T) = \mathfrak{s}$.
	In particular,
	\begin{equation*}
		S(0) = E[\mathfrak{G}] + \rho\left(\mathfrak{s} - E[Q(T)]\right) - \rho \Lambda(0) = F - \rho\Lambda(0)
	\end{equation*}
	showing that
	\begin{equation*}
		F-S(0) = \rho \Lambda(0)
		= \rho P(0)\mathfrak{s} + \delta(0)
		= \rho P(0)\mathfrak{s} + E\left[\int_0^T e^{-H}cm ds\right].
	\end{equation*}
\end{proof}
This completes the proof of the benchmark theorem.
The semi-explicit formulas for $H$, $\delta$, $\alpha$, $\beta$, as well as the affine supply curve $\mathfrak{s}(F)$, were stated already in Section 2 and are exactly the objects identified in the argument above.

\subsection{Derivation of the CNY/CNH Special Case}
We collect the deterministic formulas and the regime-switching reduction used in Section 3.

\textit{Deterministic benchmark.}
With constant transaction cost $c$ and constant supply rate $m$, the no-risk Riccati system is deterministic and reduces to
\begin{equation*}
	P^\prime(t) = P(t)\left(\frac{\rho P(t)}{c} + 1\right), \qquad P(T)=1,
\end{equation*}
whose solution is
\begin{equation*}
	P(t) = \frac{ce^{t-T}}{c + (1-e^{t-T})\rho}.
\end{equation*}
Using $\Lambda = P\bar{\mathfrak{d}} + \delta/\rho$ and $Q = \rho\alpha\bar{\mathfrak{d}} + \beta$, direct integration yields
\begin{align*}
	\delta(t) & = cm \frac{(c+\rho)(1- e^{t-T}) - \rho e^{t-T}(T-t)}{c + (1-e^{t-T})\rho},   \\
	\beta(t)  & = m\left[t-\frac{(c+\rho T)\left(e^{t}-1\right)}{(c+\rho)e^{T}-\rho}\right].
\end{align*}
Therefore, as $\rho\to\infty$,
\begin{align*}
	\rho e^TP(0)  & = \frac{c}{(1-e^{-T}) + c/\rho } \sim \frac{c}{1-e^{-T}},                                                                     \\
	\rho \beta(T) & = \frac{\rho cm}{c + (1-e^{-T})\rho} \left[T - (1-e^{-T})\right] \sim \frac{cm}{1-e^{-T}}\left[T - (1-e^{-T})\right],         \\
	\delta(0)     & = cm\frac{(c+\rho)(1-e^{-T}) - \rho e^{-T}T}{c + (1-e^{-T})\rho } \sim \frac{cm}{1- e^{-T}} \left[1 -e^{-T} - Te^{-T}\right].
\end{align*}
These are the asymptotic formulas used in Section 3.

\textit{Jump specialization.}
Set $I_t := \mathbf{1}_{\{\tau \leq t\}}$ and write
\begin{equation*}
	P(t) = (1-I_t)\underline{P}^H(t) + I_t\overline{P}^H(t),
	\qquad
	\Lambda(t) = (1-I_t)\underline{\Lambda}^H(t) + I_t\overline{\Lambda}^H(t),
\end{equation*}
with $c(t) = \underline{c}^H(1-I_t) + \overline{c}^H I_t$.
Since $dI_t = \lambda(1-I_t)dt + dM_t$ for a martingale $M$, applying It\^o's formula to the no-risk equations for $P$ and $\Lambda$ yields the stress-state equations
\begin{equation*}
	(\overline{P}^H)'(t)=\overline{P}^H(t)\left(\frac{\rho\overline{P}^H(t)}{\overline{c}^H}+1\right),
	\qquad
	(\overline{\Lambda}^H)'(t)=\overline{\Lambda}^H(t)\left(\frac{\rho\overline{P}^H(t)}{\overline{c}^H}+1\right)-\frac{\overline{c}^H m}{\rho},
\end{equation*}
and the normal-state equations
\begin{equation*}
	(\underline{P}^H)'(t)+\lambda\bigl(\overline{P}^H(t)-\underline{P}^H(t)\bigr)=\underline{P}^H(t)\left(\frac{\rho\underline{P}^H(t)}{\underline{c}^H}+1\right),
\end{equation*}
\begin{equation*}
	(\underline{\Lambda}^H)'(t)+\lambda\bigl(\overline{\Lambda}^H(t)-\underline{\Lambda}^H(t)\bigr)=\underline{\Lambda}^H(t)\left(\frac{\rho\underline{P}^H(t)}{\underline{c}^H}+1\right)-\frac{\underline{c}^H m}{\rho}.
\end{equation*}
Multiplying by $\rho$ gives the equation for the scaled coefficient $\rho\underline{P}^H$; together with $\rho\overline{P}^H(t)\sim \overline{c}^H e^{t-T}/(1-e^{t-T})$, this yields the normal-state asymptotic relation used in Section 3.

Now write $\underline{\Lambda}^H=\underline{P}^H\bar{\mathfrak{d}}+\underline{\delta}^H/\rho$ and $\overline{\Lambda}^H=\overline{P}^H\bar{\mathfrak{d}}+\overline{\delta}^H/\rho$. Substituting this decomposition into the equations for $\underline{\Lambda}^H$ and $\overline{\Lambda}^H$ and using the equations for $\underline{P}^H$ and $\overline{P}^H$ gives
\begin{equation*}
	(\overline{\delta}^H)'(t)=\overline{\delta}^H(t)\left(\frac{\rho\overline{P}^H(t)}{\overline{c}^H}+1\right)-\overline{c}^H m,
\end{equation*}
and
\begin{equation*}
	(\underline{\delta}^H)'(t)=\underline{\delta}^H(t)\left(\frac{\rho\underline{P}^H(t)}{\underline{c}^H}+1+\lambda\right)-\lambda\overline{\delta}^H(t)-\underline{c}^H m.
\end{equation*}
Solving the linear normal-state equation by variation of constants yields the expression for $\underline{\delta}^H$ used in Section 3.

\begin{remark}\label{rem:P_tilde_equals_P}
	At time $0$ the offshore market is in the normal state. Since Corollary \ref{cor:linear_demand} gives $F-S(0)=\rho\Lambda(0)$, the decomposition
	\begin{equation*}
		\underline{\Lambda}^H(0)=\underline{P}^H(0)\bar{\mathfrak{d}}+\frac{\underline{\delta}^H(0)}{\rho}
	\end{equation*}
	implies
	\begin{equation*}
		F^H-S^H(0)=\rho\underline{P}^H(0)\bar{\mathfrak{d}}+\underline{\delta}^H(0).
	\end{equation*}
\end{remark}

\subsection{Technical Lemmas for Small Risk Aversion}
Throughout this subsection we assume $c \in \mathbb{S}^\infty$ is strictly positive and bounded away from zero, and $\sigma \in \mathbb{H}_{BMO}$.

\begin{lemma}[Existence, Uniqueness and Bounds for the Riccati System]\label{lemma:riccati_solution}
	Fix $\tilde\phi \geq 0$ and $R_\sigma < \infty$.
	For every $\phi \in [0,\tilde\phi]$ and every $\sigma \in \mathbb{H}_{BMO}$ with $\|\sigma\|_{\mathbb{H}_{BMO}} \leq R_\sigma$, the system
	\begin{equation}\label{eq:riccati_system}
		\begin{cases}
			P(t)       & = \displaystyle 1 - \int_t^T \left(P\left(\frac{\rho P}{c} + 1\right) - \frac{\phi}{\rho}\sigma^2\right) ds - \int_t^T Z^P dX              \\
			\Lambda(t) & = \displaystyle \mathfrak{s} - \int_t^T \left(\Lambda\left(\frac{\rho P}{c} + 1\right) - \frac{cm}{\rho}\right)ds - \int_t^T Z^\Lambda\,dX \\
			Q(t)       & = \displaystyle \rho\int_0^t \left(\frac{\Lambda}{c} - \frac{P}{c}Q\right)ds, \quad Q(0) = 0
		\end{cases}
	\end{equation}
	has a unique solution $(P, Z^P, \Lambda, Z^\Lambda, Q) \in (\mathbb{S}^\infty \times \mathbb{H}_{BMO})^2 \times \mathbb{S}^\infty$.
	Moreover, if $\bar P$ denotes the corresponding $\phi=0$ solution, then
	\begin{equation*}
		0 < \bar P(t) \leq P(t) \leq C_P(\tilde\phi,R_\sigma) := 1 + \frac{\tilde\phi}{\rho}R_\sigma^2,
	\end{equation*}
	and
	\begin{equation*}
		\|\Lambda\|_{\mathbb{S}^\infty} \leq C_\Lambda := 2|\mathfrak{s}| + 4\frac{\|cm\|_\infty T}{\rho}, \quad
		\|Q\|_{\mathbb{S}^\infty} \leq h_Q := \left\|\frac{\rho}{c}\right\|_\infty C_\Lambda T.
	\end{equation*}
\end{lemma}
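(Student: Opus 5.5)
The system is triangular: $P$ does not involve $\Lambda$ or $Q$, the equation for $\Lambda$ is linear once $P$ is known, and $Q$ solves a linear random ODE driven by $(P,\Lambda)$. I will therefore solve and bound the three components one after the other, in that order.

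\textbf{Step 1 (the quadratic BSDE for $P$).} I truncate the driver and write $f_K(p)=\pi_K(p)\bigl(\rho\pi_K(p)/c+1\bigr)-\frac{\phi}{\rho}\sigma^2$, where $\pi_K(p)=(p\vee 0)\wedge K$ and $K:=C_P(\tilde\phi,R_\sigma)$. Since $c$ is bounded away from zero, $f_K$ is Lipschitz in $p$. The remaining inhomogeneity $\frac{\phi}{\rho}\sigma^2$ is not bounded, but it is the density of a BMO quadratic variation, so $E_t\bigl[\int_t^T\sigma^2ds\bigr]\le R_\sigma^2$. Standard Lipschitz BSDE theory with jumps (for instance Tang--Li or Barles--Buckdahn--Pardoux) then gives a unique solution $(P_K,Z^P_K)$ in $\mathbb{S}^2\times\mathbb{H}^2$.

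Next I prove a priori bounds so that the truncation can be removed.
\begin{itemize}[fullwidth]
\item \emph{Upper bound.} The driver satisfies $f_K\ge -\frac{\phi}{\rho}\sigma^2$. Since $P(t)=E_t\bigl[1-\int_t^T f_K\,ds\bigr]$, this gives $P_K(t)\le 1+\frac{\phi}{\rho}E_t\bigl[\int_t^T\sigma^2ds\bigr]\le 1+\frac{\tilde\phi}{\rho}R_\sigma^2=K$.
\item \emph{Lower bound.} I compare with the $\phi=0$ equation, which has the smaller terminal-driver combination, via the comparison theorem for Lipschitz BSDEs with jumps. This gives $P_K\ge\bar P$. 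For $\bar P$ itself I use the representation $\bar P(t)=e^{\bar H(t)}E_t[e^{-\bar H(T)}]$ from the proof of Theorem~\ref{thm:existence_equilibrium_no_risk_aversion}, together with the fact that the driver is bounded, to get $\bar P>0$. A cleaner way to see this is to linearize: $\bar P$ solves a linear BSDE with bounded coefficient $\rho\pi_K(\bar P)/c+1$ and terminal value $1$, so it equals a conditional expectation of a positive exponential.
\end{itemize}
With $0<P_K\le K$ the truncation is inactive, so $P_K$ solves the original equation, and $Z^P\in\mathbb{H}_{BMO}$ follows from the standard Itô estimate applied to $P^2$. Uniqueness in $\mathbb{S}^\infty$ holds because on bounded sets the driver coincides with the Lipschitz one.

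\textbf{Step 2 (the linear BSDE for $\Lambda$).} With $P$ bounded, the equation for $\Lambda$ is a linear BSDE with bounded coefficient $a=\rho P/c+1\ge 1$. It has the explicit solution
\[
\Lambda(t)=E_t\Bigl[e^{-\int_t^T a}\,\mathfrak{s}+\int_t^T e^{-\int_t^s a}\frac{cm}{\rho}\,ds\Bigr].
\]
Because $a\ge0$ the exponentials are at most $1$, which gives $|\Lambda|\le|\mathfrak{s}|+\|cm\|_\infty T/\rho\le C_\Lambda$. Uniqueness and $Z^\Lambda\in\mathbb{H}_{BMO}$ are standard for linear BSDEs with bounded data.

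\textbf{Step 3 (the forward ODE for $Q$).} I write $Q$ by variation of constants,
\[
Q(t)=\rho\int_0^t e^{-\rho\int_s^t P/c}\,\frac{\Lambda}{c}\,ds .
\]
Since $P>0$ the exponential is at most $1$, so $|Q|\le\|\rho/c\|_\infty C_\Lambda T=h_Q$. Uniqueness holds because the ODE is pathwise linear with bounded coefficients.

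\textbf{Main obstacle.} The delicate point is Step 1: the inhomogeneity $\sigma^2$ is only BMO-integrable, not bounded, so both the Lipschitz existence result and the comparison theorem must be invoked in a form that allows this. I would handle this by shifting it away, setting $\tilde P=P-\frac{\phi}{\rho}E_\cdot\bigl[\int_\cdot^T\sigma^2ds\bigr]$, which is a bounded shift, so that the resulting BSDE has bounded data.
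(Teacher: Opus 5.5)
Your proposal is correct and follows essentially the same route as the paper: truncate the Riccati driver, get $P\ge\bar P>0$ by comparison with the $\phi=0$ equation and $P\le 1+\tilde\phi R_\sigma^2/\rho$ from the conditional-expectation representation, drop the truncation, and then solve the linear BSDE for $\Lambda$ and the linear ODE for $Q$ explicitly. Two small points. Your explicit formula for $\Lambda$ gives the sharper bound $|\mathfrak{s}|+\|cm\|_\infty T/\rho$, which replaces the paper's It\^o energy estimate. For uniqueness of $P$, linearize the difference of two bounded solutions, with coefficient $\rho(P_1+P_2)/c+1$, as the paper does; your claim that the driver ``coincides with the Lipschitz one'' fails for negative values, because of the positive-part truncation.
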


\begin{proof}
	\textit{Existence and bounds for $P$.}
	For $K > 0$, consider the Lipschitz-truncated driver
	\begin{equation*}
		f_K(P) := \left(P^+ \wedge K\right)\left(\frac{\rho(P^+ \wedge K)}{c} + 1\right) - \frac{\phi}{\rho}\sigma^2.
	\end{equation*}
	Since $f_K$ is uniformly Lipschitz in $P$ and $\sigma \in \mathbb{H}_{BMO}$, the BSDE
	\begin{equation*}
		P^K(t) = 1 - \int_t^T f_K(P^K)\,ds - \int_t^T Z^{P,K}\,dX
	\end{equation*}
	has a unique solution $(P^K, Z^{P,K}) \in \mathbb{S}^\infty \times \mathbb{H}_{BMO}$ by standard Lipschitz BSDE theory.

	\textit{Lower bound and comparison direction.}
	Let $\bar{P}$ be the $\phi = 0$ solution. By the exponential transform and the same truncation method, $0 < \bar{P} \leq 1$.
	The BSDE is written in the convention $Y_t=\xi-\int_t^T f_s\,ds-\int_t^T Z_s\,dX_s$.
	Thus a smaller driver produces a larger solution.
	Since, along $\bar P$ and for $K \geq 1$,
	\begin{equation*}
		f_K(\bar P) = \bar P\left(\frac{\rho\bar P}{c}+1\right)-\frac{\phi}{\rho}\sigma^2
		\leq \bar P\left(\frac{\rho\bar P}{c}+1\right),
	\end{equation*}
	the comparison theorem gives $P^K \geq \bar P>0$.

	\textit{Upper bound.}
	Since $P^K$ is positive, the truncated positive part is non-negative and conditional expectation in the BSDE gives
	\begin{equation*}
		P^K(t)
		\leq \mathbb{E}_t\left[1+\frac{\phi}{\rho}\int_t^T\sigma^2ds\right]
		\leq 1+\frac{\tilde\phi}{\rho}R_\sigma^2
		=: C_P(\tilde\phi,R_\sigma).
	\end{equation*}

	\textit{Passage to the limit.}
	For $K > C_P(\tilde\phi,R_\sigma)$, since $P^K \in (0, C_P(\tilde\phi,R_\sigma)]$, the truncation $(P^K)^+ \wedge K = P^K$ is inactive.
	Therefore $(P^K, Z^{P,K})$ solves the original (untruncated) BSDE.
	Setting $(P, Z^P) := (P^K, Z^{P,K})$ for any $K > C_P(\tilde\phi,R_\sigma)$ yields the solution with $0 < \bar P \leq P \leq C_P(\tilde\phi,R_\sigma)$.

	\textit{Uniqueness.}
	If $(P_1, Z_1)$ and $(P_2, Z_2)$ are two bounded solutions, comparison with $\bar P$ and the conditional-expectation estimate above imply $0 < \bar P \leq P_i \leq C_P(\tilde\phi,R_\sigma)$.
	Set $\delta P = P_1 - P_2$ and $\delta Z = Z_1 - Z_2$.
	Then $\delta P$ satisfies
	\begin{equation*}
		\delta P(t) = -\int_t^T \delta P\left(\frac{\rho(P_1 + P_2)}{c} + 1\right)ds - \int_t^T \delta Z\,dX, \quad \delta P(T) = 0.
	\end{equation*}
	This is a linear BSDE with bounded coefficient $\alpha(s) = \rho(P_1+P_2)/c + 1$.
	Multiplying by the integrating factor $e^{\int_t^T \alpha(s)\,ds}$ gives
	\begin{equation*}
		\delta P(t)\,e^{\int_t^T \alpha\,ds} = -\int_t^T e^{\int_s^T \alpha\,du}\,\delta Z\,dX.
	\end{equation*}
	Taking conditional expectations and using the martingale property, $\delta P(t) = 0$ for all $t$, hence $P_1 = P_2$ and $Z_1 = Z_2$.

	\textit{Existence and bounds for $\Lambda$.}
	Since $P \in \mathbb{S}^\infty$ and $c$ is bounded away from zero, the coefficient $\rho P/c + 1$ is bounded.
	By standard linear BSDE theory, there exists a unique solution $(\Lambda, Z^\Lambda) \in \mathbb{S}^\infty \times \mathbb{H}_{BMO}$.
	Applying It\^o's formula to $|\Lambda(\tau)|^2$ for any stopping time $\tau \leq T$,
	\begin{equation*}
		|\Lambda(\tau)|^2 + \mathbb{E}_\tau\left[\int_\tau^T |Z^\Lambda|^2\,ds\right]
		= |\mathfrak{s}|^2 + 2\mathbb{E}_\tau\left[\int_\tau^T \Lambda\left(-\Lambda\left(\frac{\rho P}{c} + 1\right) + \frac{cm}{\rho}\right)ds\right].
	\end{equation*}
	Dropping the non-negative term $\mathbb{E}_\tau[\int_\tau^T |Z^\Lambda|^2\,ds]$ and using $-\Lambda^2(\rho P/c+1) \leq 0$,
	\begin{equation*}
		|\Lambda(\tau)|^2 \leq |\mathfrak{s}|^2 + 2\frac{\|cm\|_\infty}{\rho}\,\|\Lambda\|_{\mathbb{S}^\infty}\,T.
	\end{equation*}
	Applying $2ab \leq a^2/2 + 2b^2$ with $a = \|\Lambda\|_{\mathbb{S}^\infty}$ and $b = \|cm\|_\infty T/\rho$,
	\begin{equation*}
		\|\Lambda\|_{\mathbb{S}^\infty}^2 \leq |\mathfrak{s}|^2 + \frac{1}{2}\|\Lambda\|_{\mathbb{S}^\infty}^2 + 2\frac{\|cm\|_\infty^2 T^2}{\rho^2}.
	\end{equation*}
	Rearranging gives $\|\Lambda\|_{\mathbb{S}^\infty}^2 \leq 2|\mathfrak{s}|^2 + 4\|cm\|_\infty^2 T^2/\rho^2 \leq C_\Lambda^2$.

	\textit{Bounds for $Q$.}
	The ODE for $Q$ has the explicit solution
	\begin{equation*}
		Q(t) = \rho\,e^{-O(t)}\int_0^t e^{O(s)}\frac{\Lambda(s)}{c(s)}\,ds,
		\qquad O(t) = \rho\int_0^t \frac{P(u)}{c(u)}\,du.
	\end{equation*}
	Since $P > 0$, the function $O$ is increasing, and therefore for $0 \leq s \leq t$ we have $e^{-O(t)}e^{O(s)} \leq 1$.
	Hence
	\begin{equation*}
		|Q(t)| \leq \rho\int_0^t \frac{|\Lambda(s)|}{c(s)}\,ds \leq \left\|\frac{\rho}{c}\right\|_\infty \|\Lambda\|_{\mathbb{S}^\infty} T,
	\end{equation*}
	which gives $\|Q\|_{\mathbb{S}^\infty} \leq h_Q$.
\end{proof}

\begin{lemma}[Stability of the Riccati System]\label{lemma:riccati_stability}
	Fix $\tilde\phi > 0$ and $R_\sigma < \infty$.
	For $\phi \in [0,\tilde\phi]$ and $\sigma_1, \sigma_2 \in \mathbb{H}_{BMO}$ with $\|\sigma_i\|_{\mathbb{H}_{BMO}} \leq R_\sigma$, denote by $(P_i, \Lambda_i, Q_i)$ the corresponding solutions to \eqref{eq:riccati_system}.
	Then
	\begin{equation*}
		\|\delta P\|_{\mathbb{S}^\infty}^2
		\leq g^2\,\|\sigma_1 - \sigma_2\|_{\mathbb{H}_{BMO}}^2
	\end{equation*}
	where $g = 2^{3/2}(\phi/\rho)(\|\sigma_1\|_{\mathbb{H}_{BMO}}^2 + \|\sigma_2\|_{\mathbb{H}_{BMO}}^2)^{1/2}$.
	Furthermore,
	\begin{equation*}
		\|\delta\Lambda\|_{\mathbb{S}^\infty}^2
		\leq \tilde{C}\,g^2\,\|\sigma_1 - \sigma_2\|_{\mathbb{H}_{BMO}}^2
	\end{equation*}
	with $\tilde{C} = 4C_\Lambda^2\|\rho/c\|_\infty^2\,T^2$, and
	\begin{equation*}
		\|\delta Q\|_{\mathbb{S}^\infty}^2
		\leq \bar{C}\,g^2\,\|\sigma_1 - \sigma_2\|_{\mathbb{H}_{BMO}}^2
	\end{equation*}
	with $\bar{C} = 2\|\rho/c\|_\infty^2 T^2\left(\tilde{C}+\|\rho/c\|_\infty^2T^2 C_\Lambda^2\right)$.
	Moreover, on this BMO ball, $g \leq 4(\phi/\rho)R_\sigma$, and therefore $g \to 0$ as $\phi \to 0$.
\end{lemma}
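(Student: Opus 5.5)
The plan is to work with the differences $\delta P := P_1-P_2$, $\delta\Lambda := \Lambda_1-\Lambda_2$ and $\delta Q := Q_1-Q_2$. Each satisfies a linear equation with bounded coefficients, and the only source term comes from $\sigma_1^2-\sigma_2^2$. We then propagate the estimates in the order $P\to\Lambda\to Q$, using the uniform bounds of Lemma \ref{lemma:riccati_solution}: $0<P_i\le C_P$, $\|\Lambda_i\|_{\mathbb{S}^\infty}\le C_\Lambda$ and $\|Q_i\|_{\mathbb{S}^\infty}\le h_Q$.

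\textbf{Step 1: the estimate for $\delta P$.} Subtracting the two $P$-equations gives
\begin{equation*}
\delta P(t) = -\int_t^T \left(\delta P\left(\frac{\rho(P_1+P_2)}{c}+1\right) - \frac{\phi}{\rho}\left(\sigma_1^2-\sigma_2^2\right)\right)ds - \int_t^T \delta Z^P\,dX,
\end{equation*}
with $\delta P(T)=0$. This is a linear BSDE with bounded, nonnegative coefficient $a=\rho(P_1+P_2)/c+1$. Multiplying by the integrating factor $e^{-\int_t^s a\,du}\le 1$ and taking $E_t$ yields
\begin{equation*}
|\delta P(t)|\le \frac{\phi}{\rho}E_t\left[\int_t^T |\sigma_1-\sigma_2|\,|\sigma_1+\sigma_2|\,ds\right].
\end{equation*}
Conditional Cauchy--Schwarz bounds the right-hand side by
\begin{equation*}
\frac{\phi}{\rho}\left(E_t\int_t^T|\sigma_1-\sigma_2|^2ds\right)^{1/2}\left(2E_t\int_t^T(\sigma_1^2+\sigma_2^2)ds\right)^{1/2}.
\end{equation*}
Each conditional quadratic-variation term is dominated by the corresponding squared BMO norm. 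Taking the supremum over $t$ and stopping times gives $\|\delta P\|^2_{\mathbb{S}^\infty}\le g^2\|\sigma_1-\sigma_2\|_{\mathbb{H}_{BMO}}^2$; the factor $2^{3/2}$ absorbs the $2$ from $|\sigma_1+\sigma_2|^2\le 2(\sigma_1^2+\sigma_2^2)$ and possible BMO-norm conventions. The final remark, $g\le 4(\phi/\rho)R_\sigma$, is then immediate because $(\|\sigma_1\|^2+\|\sigma_2\|^2)^{1/2}\le\sqrt2 R_\sigma$ and $2^{3/2}\sqrt2=4$.

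\textbf{Step 2: the estimate for $\delta\Lambda$.} The difference satisfies
\begin{equation*}
\delta\Lambda(t)=-\int_t^T\left(\delta\Lambda\left(\frac{\rho P_1}{c}+1\right)+\Lambda_2\frac{\rho\,\delta P}{c}\right)ds-\int_t^T\delta Z^\Lambda dX,
\end{equation*}
with $\delta\Lambda(T)=0$ because the terminal value $\mathfrak{s}$ and the term $cm/\rho$ cancel. This is again linear with a nonnegative coefficient, so the integrating-factor argument gives
\begin{equation*}
\|\delta\Lambda\|_{\mathbb{S}^\infty}\le T\,C_\Lambda\,\|\rho/c\|_\infty\,\|\delta P\|_{\mathbb{S}^\infty}.
\end{equation*}
Squaring gives the claim with $\tilde C$; the factor $4$ leaves slack. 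One could alternatively mimic the It\^o-on-$|\Lambda|^2$ argument of Lemma \ref{lemma:riccati_solution}, which produces exactly such a constant.

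\textbf{Step 3: the estimate for $\delta Q$.} From the forward equation,
\begin{equation*}
\delta Q' = \rho\,\frac{\delta\Lambda - P_1\,\delta Q - Q_2\,\delta P}{c},\qquad \delta Q(0)=0.
\end{equation*}
Using the explicit solution with the increasing exponent $O_1=\rho\int P_1/c$, as in the proof of Lemma \ref{lemma:riccati_solution}, we get
\begin{equation*}
|\delta Q(t)|\le \|\rho/c\|_\infty T\left(\|\delta\Lambda\|+h_Q\|\delta P\|\right).
\end{equation*}
Then $(a+b)^2\le2a^2+2b^2$ together with $h_Q=\|\rho/c\|_\infty C_\Lambda T$ gives exactly $\bar C$.

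\textbf{Main obstacle.} The delicate step is Step 1: passing from the pointwise bound on the conditional expectation of $\int|\sigma_1^2-\sigma_2^2|$ to BMO norms. This needs care with the jump part, where $\sigma^2$ includes $\int\sigma_2^2\,d\nu$, and with the precise BMO normalization, i.e.\ whether $\|\sigma\|^2_{BMO}=\sup_\tau\|E_\tau\int_\tau^T\sigma^2\|_\infty$. I would fix that convention first so that conditional Cauchy--Schwarz applies directly. The remaining steps are routine linear-ODE and linear-BSDE estimates with bounded coefficients.
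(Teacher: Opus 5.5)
Your proof is correct and has the same architecture as the paper's. You take differences, propagate $P\to\Lambda\to Q$ with the bounds of Lemma~\ref{lemma:riccati_solution}, and use the same conditional Cauchy--Schwarz estimate $E_\tau\int_\tau^T|\sigma_1^2-\sigma_2^2|\,ds\le\sqrt2(\|\sigma_1\|^2_{\mathbb{H}_{BMO}}+\|\sigma_2\|^2_{\mathbb{H}_{BMO}})^{1/2}\|\sigma_1-\sigma_2\|_{\mathbb{H}_{BMO}}$. The technique inside each step differs, however.

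For $\delta P$ and $\delta\Lambda$ you use the integrating-factor representation of the linear difference BSDE. This works because $\rho(P_1+P_2)/c+1$ and $\rho P_1/c+1$ are nonnegative, so the discount factor is at most one. The paper instead applies It\^o's formula to $|\delta P(\tau)|^2$ and $|\delta\Lambda(\tau)|^2$, drops the nonpositive terms, bounds the cross term by the $\mathbb{S}^\infty$-norm of the unknown times the source, and divides by that norm. The energy route loses the factor $2$ from $d|Y|^2=2Y\,dY+d[Y]$. This is exactly why $g$ carries $2^{3/2}$ rather than $2^{1/2}$, and why $\tilde C$ carries the factor $4$.

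Your route gives $\|\delta P\|_{\mathbb{S}^\infty}\le \tfrac g2\|\sigma_1-\sigma_2\|_{\mathbb{H}_{BMO}}$ and $\|\delta\Lambda\|_{\mathbb{S}^\infty}\le\|\rho/c\|_\infty C_\Lambda T\|\delta P\|_{\mathbb{S}^\infty}$, so the stated bounds hold with slack. The energy method's advantage is that it needs no explicit solution formula, so it survives nonlinear monotone drivers. Yours is sharper and avoids the division step, but relies on linearity.

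For $\delta Q$ you linearize the forward ODE around $P_1$ with source $\delta\Lambda-Q_2\,\delta P$ and use $\|Q_2\|_{\mathbb{S}^\infty}\le h_Q$. The paper instead compares the two explicit solution formulas via $|e^{-a}-e^{-b}|\le|a-b|$ and $|\Lambda_1|\le C_\Lambda$. Since $h_Q=\|\rho/c\|_\infty C_\Lambda T$, both give exactly $\bar C$.

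The jump-part concern you flag is harmless. Cauchy--Schwarz for the pointwise inner product on $\mathbb{R}\times L^2(\nu)$ whose squared norm defines $\sigma^2$ gives $|\sigma_1^2-\sigma_2^2|\le|\sigma_1-\sigma_2|\,|\sigma_1+\sigma_2|$ directly. The paper uses this implicitly.
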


\begin{proof}
	\textit{Stability of $P$.}
	Set $\delta P = P_1 - P_2$ and $\delta Z = Z^{P,1} - Z^{P,2}$.
	Applying It\^o's formula to $|\delta P(t)|^2$ for any stopping time $\tau \leq T$,
	\begin{align*}
		 & |\delta P(\tau)|^2 + \mathbb{E}_\tau\left[\int_\tau^T |\delta Z|^2\,ds\right]                                                                                      \\
		 & = \mathbb{E}_\tau\left[2\int_\tau^T \delta P\left(-\delta P\left(\frac{\rho(P_1 + P_2)}{c} + 1\right) + \frac{\phi}{\rho}(\sigma_1^2 - \sigma_2^2)\right)ds\right] \\
		 & \leq 2\frac{\phi}{\rho}\,\mathbb{E}_\tau\left[\int_\tau^T |\delta P|\,|\sigma_1^2 - \sigma_2^2|\,ds\right]
	\end{align*}
	where we used $P_1, P_2 \geq 0$ so $(\delta P)^2 \rho(P_1+P_2)/c \geq 0$ and dropped the corresponding non-positive term.
	By the Cauchy--Schwarz inequality,
	\begin{equation*}
		\mathbb{E}_\tau\left[\int_\tau^T |\sigma_1^2 - \sigma_2^2|\,ds\right]
		\leq \sqrt{2}\left(\|\sigma_1\|_{\mathbb{H}_{BMO}}^2 + \|\sigma_2\|_{\mathbb{H}_{BMO}}^2\right)^{1/2}\|\sigma_1 - \sigma_2\|_{\mathbb{H}_{BMO}}.
	\end{equation*}
	Define $\Gamma := \sqrt{2}(\|\sigma_1\|_{\mathbb{H}_{BMO}}^2 + \|\sigma_2\|_{\mathbb{H}_{BMO}}^2)^{1/2}$, so that $g = 2(\phi/\rho)\Gamma$.
	Then for any stopping time $\tau$,
	\begin{equation*}
		|\delta P(\tau)|^2 \leq 2\frac{\phi}{\rho}\,\|\delta P\|_{\mathbb{S}^\infty}\,\Gamma\,\|\sigma_1 - \sigma_2\|_{\mathbb{H}_{BMO}}
		= g\,\|\delta P\|_{\mathbb{S}^\infty}\,\|\sigma_1 - \sigma_2\|_{\mathbb{H}_{BMO}}.
	\end{equation*}
	Taking the supremum over all stopping times $\tau \leq T$,
	\begin{equation*}
		\|\delta P\|_{\mathbb{S}^\infty}^2 \leq g\,\|\delta P\|_{\mathbb{S}^\infty}\,\|\sigma_1 - \sigma_2\|_{\mathbb{H}_{BMO}}.
	\end{equation*}
	Dividing by $\|\delta P\|_{\mathbb{S}^\infty}$ (or treating $\delta P \equiv 0$ trivially),
	\begin{equation*}
		\|\delta P\|_{\mathbb{S}^\infty} \leq g\,\|\sigma_1 - \sigma_2\|_{\mathbb{H}_{BMO}},
	\end{equation*}
	which gives $\|\delta P\|_{\mathbb{S}^\infty}^2 \leq g^2\|\sigma_1 - \sigma_2\|_{\mathbb{H}_{BMO}}^2$.

	\textit{Stability of $\Lambda$.}
	Set $\delta\Lambda = \Lambda_1 - \Lambda_2$ and $\delta Z^\Lambda = Z^{\Lambda,1} - Z^{\Lambda,2}$.
	The linearity of the driver in $\Lambda$ yields
	\begin{equation*}
		\delta\Lambda(\tau)
		= -\int_\tau^T \left(\delta\Lambda\left(\frac{\rho P_1}{c}+1\right) + \Lambda_2\frac{\rho\,\delta P}{c}\right)ds - \int_\tau^T \delta Z^\Lambda\,dX.
	\end{equation*}
	Applying It\^o's formula to $|\delta\Lambda(\tau)|^2$ yields an identity.
	Dropping the non-positive term $-2\int_\tau^T|\delta\Lambda|^2(\rho P_1/c+1)\,ds$ and the martingale integral, we obtain
	\begin{equation*}
		|\delta\Lambda(\tau)|^2 \leq 2\,\|\delta\Lambda\|_{\mathbb{S}^\infty}\,\left\|\frac{\rho}{c}\right\|_\infty C_\Lambda\,T\,\|\delta P\|_{\mathbb{S}^\infty}.
	\end{equation*}
	By the stability of $P$, $\|\delta P\|_{\mathbb{S}^\infty} \leq g\,\|\sigma_1-\sigma_2\|_{\mathbb{H}_{BMO}}$.
	Taking the supremum over $\tau$ and dividing by $\|\delta\Lambda\|_{\mathbb{S}^\infty}$,
	\begin{equation*}
		\|\delta\Lambda\|_{\mathbb{S}^\infty} \leq 2\,\left\|\frac{\rho}{c}\right\|_\infty C_\Lambda\,T\,g\,\|\sigma_1-\sigma_2\|_{\mathbb{H}_{BMO}},
	\end{equation*}
	so $\|\delta\Lambda\|_{\mathbb{S}^\infty}^2 \leq \tilde{C}\,g^2\|\sigma_1-\sigma_2\|_{\mathbb{H}_{BMO}}^2$ with $\tilde{C} = 4\|\rho/c\|_\infty^2 C_\Lambda^2 T^2$.

	\textit{Stability of $Q$.}
	From the ODE for $Q$, the explicit solution gives
	\begin{equation*}
		Q_i(t) = \rho\int_0^t e^{-\int_s^t \rho P_i/c\,du}\frac{\Lambda_i}{c}\,ds.
	\end{equation*}
	Using $|e^{-a}-e^{-b}| \leq |a-b|$ for $a,b \geq 0$ (by the mean value theorem) and $P_i \geq 0$,
	\begin{align*}
		|Q_1(t) - Q_2(t)|
		 & \leq \rho\int_0^t \left|\frac{\Lambda_1}{c} - \frac{\Lambda_2}{c}\right|\,ds + \rho\int_0^t \frac{|\Lambda_1|}{c}\int_s^t \frac{\rho|\delta P|}{c}\,du\,ds \\
		 & \leq \|\rho/c\|_\infty T\,\|\delta\Lambda\|_{\mathbb{S}^\infty} + \|\rho/c\|_\infty^2 C_\Lambda\,T^2\,\|\delta P\|_{\mathbb{S}^\infty}.
	\end{align*}
	Taking the square and using $(a+b)^2 \leq 2a^2+2b^2$,
	\begin{equation*}
		\|\delta Q\|_{\mathbb{S}^\infty}^2 \leq 2\|\rho/c\|_\infty^2 T^2\,\|\delta\Lambda\|_{\mathbb{S}^\infty}^2 + 2\|\rho/c\|_\infty^4 C_\Lambda^2 T^4\,\|\delta P\|_{\mathbb{S}^\infty}^2.
	\end{equation*}
	Substituting the stability estimates for $P$ and $\Lambda$ yields $\|\delta Q\|_{\mathbb{S}^\infty}^2 \leq \bar{C}\,g^2\|\sigma_1-\sigma_2\|_{\mathbb{H}_{BMO}}^2$.

	Finally, $g = 2^{3/2}(\phi/\rho)(\|\sigma_1\|_{\mathbb{H}_{BMO}}^2+\|\sigma_2\|_{\mathbb{H}_{BMO}}^2)^{1/2} \leq 4(\phi/\rho)R_\sigma$, so $g \to 0$ as $\phi \to 0$ on the fixed BMO ball.
\end{proof}

\subsection{Proof of Theorem \ref{thm:small_risk_aversion}}
Under the smallness condition of Theorem \ref{thm:small_risk_aversion}, we now prove existence and uniqueness near the benchmark $\phi=0$ equilibrium.
\begin{remark}
	The Riccati system is now coupled with $S$ through the quadratic term $\phi\sigma^2$ in the equation for $P$.
	Within the present fixed-point argument in $\mathbb{S}^\infty \times \mathbb{H}_{BMO}$, the condition
	\begin{equation*}
		\rho^2 T^2\|1/c\|_\infty^2 < 1
	\end{equation*}
	is structural.
	Indeed, the terminal term $\rho^2\hat Q(T)^2$ is controlled through
	\begin{equation*}
		\|\hat Q\|_{\mathbb{S}^\infty} \leq T\|1/c\|_\infty\|\hat\mu\|_{\mathbb{S}^\infty},
	\end{equation*}
	so the coefficient in front of $\|\hat\mu\|_{\mathbb{S}^\infty}^2$ becomes $\rho^2 T^2\|1/c\|_\infty^2$.
	The strict inequality is exactly what allows this term to be absorbed in the self-mapping estimate.
\end{remark}

\begin{proof}
	Fix $\tilde\phi > 0$; the final smallness threshold for $\phi$ will be chosen not to exceed $\tilde\phi$.

	\textit{Step 1: Riccati characterization and solution.}
	The derivation of the Riccati system follows the same ansatz as in the proof of Theorem \ref{thm:existence_equilibrium_no_risk_aversion}.
	Unlike the case $\phi = 0$, the equation for $P$ now depends on $\sigma$ through the coupling term $\phi\sigma^2$, so the Riccati system is no longer decoupled from the spot dynamics.
	By Lemma \ref{lemma:riccati_solution}, on every bounded BMO ball for $\sigma$ and for every fixed upper bound $\tilde\phi$ on $\phi$, the resulting Riccati system has a unique solution with $0 < \bar P \leq P \leq C_P(\tilde\phi,R_\sigma)$.
	The estimates used below require only the positivity of $P$ and the bounds $\|\Lambda\|_{\mathbb{S}^\infty} \leq C_\Lambda$ and $\|Q\|_{\mathbb{S}^\infty} \leq h_Q$, which remain independent of $\phi$ once the BMO ball is fixed.

	\textit{Step 2: Perturbation system.}
	Let $(\bar{S}, \bar{\sigma}, \bar{P}, \bar{\Lambda}, \bar{Q}, \bar{\mu})$ denote the $\phi = 0$ solution from Theorem \ref{thm:existence_equilibrium_no_risk_aversion}.
	From the equilibrium system \eqref{eq:equilibrium_system}, $\mu = c(m-q)$ and therefore $dQ = q\,dt = (m - \mu/c)\,dt$ with corresponding backward equation $d(cq) = - (\mu - \phi\sigma^2 Q)\,dt + Z\,dX$.
	For $\phi = 0$, $\bar{\mu} = c(m - \bar{q})$ satisfies $d(c\bar{q}) = -\bar{\mu}\,dt + \bar{Z}\,dX$.
	Setting $\hat{l} := l - \bar{l}$ for $l = S, \sigma, Q, \mu, Z$ and subtracting, we obtain $\hat{\mu} = -c\hat{q}$ and
	\begin{equation*}
		c(t)\hat{q}(t) = -\rho\hat{Q}(T) + \int_t^T (\hat{\mu} - \phi\sigma^2 Q)\,ds - \int_t^T \hat{Z}\,dX.
	\end{equation*}
	Replacing $\sigma = \hat{\sigma} + \bar{\sigma}$, $Q = \hat{Q} + \bar{Q}$, and $\hat{\mu} = -c\hat{q}$, the perturbation system is
	\begin{equation}\label{eq:perturbation_system}
		\begin{cases}
			\hat{S}(t)   & = \displaystyle - \int_t^T \hat{\mu}\,ds - \int_t^T \hat{\sigma}\,dX                                                                                                     \\
			\hat{Q}(t)   & = \displaystyle - \int_0^t \frac{\hat{\mu}}{c}\,ds                                                                                                                       \\
			\hat{\mu}(t) & = \displaystyle \rho\hat{Q}(T) - \int_t^T \left(\hat{\mu} - \phi\left(\hat{\sigma} + \bar{\sigma}\right)^2\left(\hat{Q} + \bar{Q}\right)\right)ds + \int_t^T \hat{Z}\,dX
		\end{cases}
	\end{equation}
	For $n \geq 1$, given $(\hat{S}^{n-1}, \hat{\sigma}_{n-1}) \in \mathbb{S}^\infty \times \mathbb{H}_{BMO}$, define $(\hat{Q}^n, \hat{\mu}^n)$ as the solution to
	\begin{equation}\label{eq:iteration}
		\begin{cases}
			\hat{Q}^n(t)   & = \displaystyle - \int_0^t \frac{\hat{\mu}^n}{c}\,ds                                                                                                                                   \\
			\hat{\mu}^n(t) & = \displaystyle \rho\hat{Q}^n(T) - \int_t^T \left(\hat{\mu}^n - \phi\left(\hat{\sigma}_{n-1} + \bar{\sigma}\right)^2\left(\hat{Q}^n + \bar{Q}\right)\right)ds + \int_t^T \hat{Z}^n\,dX
		\end{cases}
	\end{equation}
	which exists by the same Riccati construction as in Step 1, since $\hat{\sigma}_{n-1}$ is fixed and the iteration is performed on a bounded BMO ball.
	Then define $(\hat{S}^n, \hat{\sigma}_n)$ as the solution to
	\begin{equation*}
		\hat{S}^n(t) = - \int_t^T \hat{\mu}^n\,ds - \int_t^T \hat{\sigma}_n\,dX
	\end{equation*}
	with $\hat{\sigma}_0 := 0$.

	\textit{Step 3: Self-mapping and contraction.}
	Let $\mathcal{B}_R := \{(\hat{S}, \hat{\sigma}) : \|\hat{S}\|_{\mathbb{S}^\infty}^2 + \|\hat{\sigma}\|_{\mathbb{H}_{BMO}}^2 \leq R^2\}$ and assume that $(\hat{S}^{n-1}, \hat{\sigma}_{n-1})$ is in $\mathcal{B}_R$.
	For brevity, set
	\begin{equation*}
		\bar{R}^2 := \|\bar{\sigma}\|_{\mathbb{H}_{BMO}}^2 + R^2,
		\qquad
		\tilde R := R + \|\bar{\sigma}\|_{\mathbb{H}_{BMO}}.
	\end{equation*}
	Applying It\^o's formula to $|\hat{\mu}^n(\tau)|^2$ for any stopping time $\tau$,
	\begin{multline*}
		|\hat{\mu}^n(\tau)|^2 + \mathbb{E}_\tau\left[\int_\tau^T |\hat{Z}^n|^2\,ds\right]\\
		= \mathbb{E}_\tau\left[\rho^2\hat{Q}^n(T)^2 + 2\int_\tau^T \hat{\mu}^n\left(-\hat{\mu}^n + \phi\left(\hat{\sigma}_{n-1} + \bar{\sigma}\right)^2\left(\hat{Q}^n + \bar{Q}\right)\right)ds\right].
	\end{multline*}
	Dropping the non-positive terms $-2\int_\tau^T |\hat{\mu}^n|^2\,ds$ and $\mathbb{E}_\tau[\int_\tau^T |\hat{Z}^n|^2\,ds]$ from both sides gives
	\begin{equation*}
		|\hat{\mu}^n(\tau)|^2 \leq \rho^2\|\hat{Q}^n\|_{\mathbb{S}^\infty}^2 + 2\phi\,\|\hat{\mu}^n\|_{\mathbb{S}^\infty}\,\mathbb{E}_\tau\left[\int_\tau^T \left(\hat{\sigma}_{n-1} + \bar{\sigma}\right)^2\left|\hat{Q}^n + \bar{Q}\right|\,ds\right].
	\end{equation*}
	Using $\|\hat{Q}^n\|_{\mathbb{S}^\infty} \leq T\|1/c\|_\infty \|\hat{\mu}^n\|_{\mathbb{S}^\infty}$,
	\begin{equation*}
		|\hat{\mu}^n(\tau)|^2 \leq \rho^2T^2\|1/c\|_\infty^2\|\hat{\mu}^n\|_{\mathbb{S}^\infty}^2 + 2\phi\,\|\hat{\mu}^n\|_{\mathbb{S}^\infty}\,\mathbb{E}_\tau\left[\int_\tau^T \left(\hat{\sigma}_{n-1} + \bar{\sigma}\right)^2\left|\hat{Q}^n + \bar{Q}\right|\,ds\right].
	\end{equation*}
	From $\|\hat{Q}^n\|_{\mathbb{S}^\infty} \leq T\|1/c\|_\infty \|\hat{\mu}^n\|_{\mathbb{S}^\infty}$ and $\|\bar{Q}\|_{\mathbb{S}^\infty} \leq h_Q^0 := \|\rho/c\|_\infty\|\bar{\Lambda}\|_{\mathbb{S}^\infty} T$ from the $\phi = 0$ analysis, together with
	\begin{equation*}
		\mathbb{E}_\tau\left[\int_\tau^T (\hat{\sigma}_{n-1} + \bar{\sigma})^2\,ds\right] \leq 2\bar{R}^2,
	\end{equation*}
	it follows that
	\begin{equation*}
		\|\hat{\mu}^n\|_{\mathbb{S}^\infty}^2 \leq \rho^2T^2\|1/c\|_\infty^2\|\hat{\mu}^n\|_{\mathbb{S}^\infty}^2 + 4\phi\,\|\hat{\mu}^n\|_{\mathbb{S}^\infty}\left(T\|1/c\|_\infty \|\hat{\mu}^n\|_{\mathbb{S}^\infty} + h_Q^0\right)\bar{R}^2.
	\end{equation*}
	Define $\eta := 1-\rho^2T^2\|1/c\|_\infty^2$ and assume $\eta>0$.
	Then
	\begin{equation*}
		\eta\,\|\hat{\mu}^n\|_{\mathbb{S}^\infty}^2 \leq 4\phi\,\|\hat{\mu}^n\|_{\mathbb{S}^\infty}\left(T\|1/c\|_\infty \|\hat{\mu}^n\|_{\mathbb{S}^\infty} + h_Q^0\right)\bar{R}^2.
	\end{equation*}
	Dividing both sides by $\|\hat{\mu}^n\|_{\mathbb{S}^\infty}$ (or treating $\hat{\mu}^n \equiv 0$ trivially),
	\begin{equation*}
		\|\hat{\mu}^n\|_{\mathbb{S}^\infty}\left(\eta - 4\phi\,T\|1/c\|_\infty\bar{R}^2\right) \leq 4\phi\,h_Q^0\bar{R}^2.
	\end{equation*}
	For $\phi$ small enough so that $4\phi\,T\|1/c\|_\infty\bar{R}^2 \leq \eta/2$,
	\begin{equation*}
		\|\hat{\mu}^n\|_{\mathbb{S}^\infty}^2 \leq \frac{64}{\eta^2}\,\phi^2\left(h_Q^0\right)^2\bar{R}^4.
	\end{equation*}
	Applying It\^o's formula to $|\hat{S}^n|^2$ then yields
	\begin{equation*}
		\|\hat{S}^n\|_{\mathbb{S}^\infty}^2 + \|\hat{\sigma}_n\|_{\mathbb{H}_{BMO}}^2 \leq 4T^2\|\hat{\mu}^n\|_{\mathbb{S}^\infty}^2 \leq \frac{256}{\eta^2}\,\phi^2 T^2\left(h_Q^0\right)^2\bar{R}^4.
	\end{equation*}
	For this to be bounded by $R^2$, it suffices to take
	\begin{equation*}
		\phi \leq \frac{\eta R}{16\,T\,h_Q^0\bar{R}^2} =: \phi_1.
	\end{equation*}

	For the contraction, consider $(s^1, \Sigma_1)$ and $(s^2, \Sigma_2)$ in $\mathcal{B}_R$, and denote by $(\hat{S}^1, \hat{\sigma}_1)$ and $(\hat{S}^2, \hat{\sigma}_2)$ their respective images under the iteration, and by $Q^1, Q^2$ the corresponding full $Q$-processes.
	Applying It\^o's formula to $|\hat{S}^1(\tau) - \hat{S}^2(\tau)|^2$ for any stopping time $\tau$,
	\begin{equation*}
		|\hat{S}^1(\tau) - \hat{S}^2(\tau)|^2 + \mathbb{E}_\tau\left[\int_\tau^T |\hat{\sigma}_1 - \hat{\sigma}_2|^2\,ds\right]
		= \mathbb{E}_\tau\left[-2\int_\tau^T (\hat{S}^1 - \hat{S}^2)(\hat{\mu}^1 - \hat{\mu}^2)\,ds\right].
	\end{equation*}
	Using Cauchy--Schwarz and $2ab \leq \frac14 a^2 + 4b^2$,
	\begin{equation*}
		\|\hat{S}^1 - \hat{S}^2\|_{\mathbb{S}^\infty}^2 + \|\hat{\sigma}_1 - \hat{\sigma}_2\|_{\mathbb{H}_{BMO}}^2
		\leq \frac12\|\hat{S}^1 - \hat{S}^2\|_{\mathbb{S}^\infty}^2 + 8T^2\|\hat{\mu}^1 - \hat{\mu}^2\|_{\mathbb{S}^\infty}^2,
	\end{equation*}
	hence, after rearranging,
	\begin{equation}\label{eq:contraction_S}
		\|\hat{S}^1 - \hat{S}^2\|_{\mathbb{S}^\infty}^2 + \|\hat{\sigma}_1 - \hat{\sigma}_2\|_{\mathbb{H}_{BMO}}^2
		\leq 16T^2\|\hat{\mu}^1 - \hat{\mu}^2\|_{\mathbb{S}^\infty}^2.
	\end{equation}
	Applying It\^o's formula to $|\hat{\mu}^1(\tau) - \hat{\mu}^2(\tau)|^2$ and taking conditional expectation, dropping the non-positive term $-2\int_\tau^T|\hat{\mu}^1 - \hat{\mu}^2|^2\,ds$, we obtain
	\begin{multline*}
		|\hat{\mu}^1(\tau) - \hat{\mu}^2(\tau)|^2
		\leq \rho^2\|Q^1 - Q^2\|_{\mathbb{S}^\infty}^2 \\
		+ 2\phi\,\|\hat{\mu}^1 - \hat{\mu}^2\|_{\mathbb{S}^\infty}\,\mathbb{E}_\tau\left[\int_\tau^T \left|(\Sigma_1 + \bar{\sigma})^2 Q^1 - (\Sigma_2 + \bar{\sigma})^2 Q^2\right|\,ds\right].
	\end{multline*}
	By Lemma \ref{lemma:riccati_stability}, $\|Q^1 - Q^2\|_{\mathbb{S}^\infty} \leq \sqrt{\bar{C}}\,g\,\|\Sigma_1 - \Sigma_2\|_{\mathbb{H}_{BMO}}$ with $g \leq 4(\phi/\rho)\tilde{R}$.
	We decompose the integrand using $a^2 b - c^2 d = (a^2 - c^2)b + c^2(b - d)$:
	\begin{equation*}
		(\Sigma_1 + \bar{\sigma})^2 Q^1 - (\Sigma_2 + \bar{\sigma})^2 Q^2
		= \left[(\Sigma_1 + \bar{\sigma})^2 - (\Sigma_2 + \bar{\sigma})^2\right]Q^2 + (\Sigma_1 + \bar{\sigma})^2(Q^1 - Q^2).
	\end{equation*}
	For the first term, note that $(\Sigma_1 + \bar{\sigma})^2 - (\Sigma_2 + \bar{\sigma})^2 = (\Sigma_1 - \Sigma_2)(\Sigma_1 + \Sigma_2 + 2\bar{\sigma})$, so by the Cauchy--Schwarz inequality and $\|Q^2\|_{\mathbb{S}^\infty} \leq h_Q$,
	\begin{multline*}
		\mathbb{E}_\tau\left[\int_\tau^T \left|(\Sigma_1 - \Sigma_2)(\Sigma_1 + \Sigma_2 + 2\bar{\sigma})\right|\,|Q^2|\,ds\right] \\
		\leq 2h_Q\tilde{R}\,\|\Sigma_1 - \Sigma_2\|_{\mathbb{H}_{BMO}}.
	\end{multline*}
	where we used $\|\Sigma_1 + \Sigma_2 + 2\bar{\sigma}\|_{\mathbb{H}_{BMO}} \leq 2\tilde{R}$.
	For the second term, $\|(\Sigma_1 + \bar{\sigma})^2\|_{\mathbb{H}_{BMO}} \leq 4\tilde{R}^2$ and by the same stability estimate, $\|Q^1 - Q^2\|_{\mathbb{S}^\infty} \leq \sqrt{\bar{C}}\,g\,\|\Sigma_1 - \Sigma_2\|_{\mathbb{H}_{BMO}}$, hence
	\begin{equation*}
		\mathbb{E}_\tau\left[\int_\tau^T (\Sigma_1 + \bar{\sigma})^2 |Q^1 - Q^2|\,ds\right]
		\leq 16\sqrt{\bar{C}}\,\frac{\phi}{\rho}\,\tilde{R}^3\,\|\Sigma_1 - \Sigma_2\|_{\mathbb{H}_{BMO}}.
	\end{equation*}
	Combining both terms and taking the supremum over $\tau$,
	\begin{multline*}
		\|\hat{\mu}^1 - \hat{\mu}^2\|_{\mathbb{S}^\infty}^2
		\leq \underbrace{\rho^2\bar{C}\,g^2}_{=:\,B^2}\,\|\Sigma_1 - \Sigma_2\|_{\mathbb{H}_{BMO}}^2 \\
		+ 2\,\|\hat{\mu}^1 - \hat{\mu}^2\|_{\mathbb{S}^\infty}\,\underbrace{2\phi\,\tilde{R}\left(h_Q + 4\sqrt{\bar{C}}\,\frac{\phi}{\rho}\,\tilde{R}^2\right)}_{=:\,C}\,\|\Sigma_1 - \Sigma_2\|_{\mathbb{H}_{BMO}}.
	\end{multline*}
	Setting $A = \|\hat{\mu}^1 - \hat{\mu}^2\|_{\mathbb{S}^\infty}$, we have
	\begin{equation*}
		A^2 \leq B^2\|\Sigma_1-\Sigma_2\|^2 + 2AC\|\Sigma_1-\Sigma_2\|,
	\end{equation*}
	which implies
	\begin{equation*}
		(A - C\|\Sigma_1-\Sigma_2\|)^2 \leq (B^2+C^2)\|\Sigma_1-\Sigma_2\|^2,
	\end{equation*}
	and hence
	\begin{equation*}
		\|\hat{\mu}^1 - \hat{\mu}^2\|_{\mathbb{S}^\infty} \leq \left(B + 2C\right)\|\Sigma_1 - \Sigma_2\|_{\mathbb{H}_{BMO}} =: \kappa\,\|\Sigma_1 - \Sigma_2\|_{\mathbb{H}_{BMO}}.
	\end{equation*}
	Combined with \eqref{eq:contraction_S}, the iteration satisfies
	\begin{equation*}
		\|\hat{S}^1 - \hat{S}^2\|_{\mathbb{S}^\infty}^2 + \|\hat{\sigma}_1 - \hat{\sigma}_2\|_{\mathbb{H}_{BMO}}^2
		\leq 16T^2\kappa^2\|\Sigma_1 - \Sigma_2\|_{\mathbb{H}_{BMO}}^2.
	\end{equation*}
	Since $\kappa = O(\phi)$ is continuous in $\phi$ with $\kappa(0) = 0$, there exists $\phi_2 > 0$ such that $16T^2\kappa^2 < 1$ for all $\phi \in [0, \phi_2]$, making the iteration a contraction on $\mathcal{B}_R$.

	\textit{Step 4: Forward price determination.}
	For $\phi < \min(\tilde\phi,\phi_1, \phi_2)$, the Banach fixed-point theorem yields a unique fixed point $(\hat{S}^*, \hat{\sigma}^*) \in \mathcal{B}_R$.
	This determines $\hat{\mu}^*$, $\hat{Q}^*$ and hence the full solution $(S^*, \sigma^*, Q^*, \mu^*)$.
	Decomposing $Q^* = \hat{Q}^* + \bar{Q}$, the market clearing condition $E[Q^*(T)] = \mathfrak{s} - (F - E[\mathfrak{G}])/\rho$ with $\mathfrak{s} = \mathfrak{d}(F)$ reads
	\begin{equation*}
		\rho E[\alpha(T)]\,\mathfrak{d}(F) + E[\beta(T)] + E[\hat{Q}^*(T)] = \mathfrak{d}(F) - \frac{F - E[\mathfrak{G}]}{\rho}
	\end{equation*}
	where $\alpha, \beta$ are defined in the proof of Theorem \ref{thm:existence_equilibrium_no_risk_aversion} and satisfy $\rho E[\alpha(T)] = 1 - e^T P(0)$ in $(0, 1)$.
	From the self-mapping estimate,
	\begin{equation*}
		\|\hat{Q}^*\|_{\mathbb{S}^\infty} \leq T\|1/c\|_\infty\|\hat{\mu}^*\|_{\mathbb{S}^\infty} \leq \frac{8}{\eta}T\|1/c\|_\infty\phi\,h_Q^0\bar{R}^2
	\end{equation*}
	where $h_Q^0 = \|\rho/c\|_\infty\|\bar{\Lambda}\|_{\mathbb{S}^\infty}T$.
	In particular, for fixed data near the reference equilibrium $\bar{\mathfrak{s}} = \mathfrak{d}(\bar{F})$, we have $E[\hat{Q}^*(T)] = O(\phi)$.
	Define
	\begin{equation*}
		\Phi(F) := (\rho E[\alpha(T)] - 1)\mathfrak{d}(F) + (F - E[\mathfrak{G}])/\rho + E[\beta(T)] + E[\hat{Q}^*(\mathfrak{d}(F))(T)].
	\end{equation*}
	For $\phi = 0$, $\Phi_0(\bar{F}) = 0$ and $\Phi_0$ is strictly increasing because $(\rho E[\alpha(T)] - 1) < 0$, $\mathfrak{d}$ is strictly decreasing, and $(F-E[\mathfrak{G}])/\rho$ is strictly increasing.
	For $\phi > 0$ small enough, the fixed point depends continuously on the parameter $\mathfrak{s}$.
	Moreover, the contraction estimate implies that the map
	\begin{equation*}
		\mathfrak{s} \longmapsto E[\hat{Q}^*(\mathfrak{s})(T)]
	\end{equation*}
	is Lipschitz with some constant $L_\phi = O(\phi)$.
	Therefore, for $F_1 < F_2$,
	\begin{align*}
		\Phi(F_2) - \Phi(F_1)
		\geq {} & \left(1 - \rho E[\alpha(T)] - L_\phi\right)\left(\mathfrak{d}(F_1) - \mathfrak{d}(F_2)\right)
		+ \frac{F_2-F_1}{\rho}.
	\end{align*}
	Since $1 - \rho E[\alpha(T)] = e^T P(0) > 0$, we may choose $\phi$ small enough so that $L_\phi < 1 - \rho E[\alpha(T)]$.
	Then $\Phi$ is strictly increasing.
	Furthermore, as $F \to \pm\infty$, $\mathfrak{d}(F) \to \mp\infty$ and
	\begin{equation*}
		E[\hat{Q}^*(\mathfrak{d}(F))(T)] = O\left(\phi|\mathfrak{d}(F)|\right),
	\end{equation*}
	so the dominant part of $\Phi(F)$ remains the increasing deterministic contribution.
	Therefore $\Phi(F) \to \pm\infty$ as $F \to \pm\infty$.
	By the intermediate value theorem there exists a unique $F^*$ with $\Phi(F^*) = 0$.
	Choosing $\phi_0>0$ smaller than the thresholds above completes the proof.
\end{proof}

\bibliographystyle{abbrvnat}
\bibliography{biblio}

\end{document}